\newtheorem{thm}{Theorem}[section]
\newtheorem{prop}[thm]{Proposition}
\theoremstyle{definition}
\newtheorem{example}[thm]{Example}
\theoremstyle{remark}
\numberwithin{equation}{section}
\renewcommand{\Re}{\mathrm{Re}}
\renewcommand{\Im}{\mathrm{Im}}
\begin{document}

\title{Classical and quantum Kummer shape algebras}

% \begin{abstract}

% \end{abstract}

\maketitle

\begin{center}
A. Odzijewicz, E. Wawreniuk\\
Institute of Mathematics\\
University in Bia\l{}ystok\\
Cio\l{}kowskiego 1M, 15-245 Bia\l{}ystok, Poland\\
aodzijew@uwb.edu.pl , ewawreniuk@math.uwb.edu.pl
\end{center}

\tableofcontents
\textbf{Keywords:} Integrable systems, harmonic oscillator, coherent states, star product, reproducing kernels.

\section*{Abstract}

We study a family of integrable systems of nonlinearly coupled harmonic oscillators on the classical and quantum levels. We show that the integrability of these systems follows from their symmetry characterized by algebras called here Kummer shape algebras. The resolution of identity for a wide class of reproducing kernels is found. A number of examples, illustrating this theory, is also presented.

\section{Introduction}

It is a commonly accepted point of view that  systems of nonlinearly coupled harmonic oscillators provide good models for the description of a wide class of physical phenomena. First of all, they appear naturally in optics where light modes interact between themselves through the medium in which polarization depends on the electromagnetic field in a nonlinear way \cite{Perina}, \cite{Mil}. On the other hand, one uses them for modeling the dynamics of various classical mechanical systems, see for example \cite{holm}. The oscillations of some modes in molecular dynamics could also be described in this manner \cite{Allen}.

In the Kummer's paper \cite{Kummer} a hamiltonian system, in which the interaction between harmonic oscillators is described by some homogeneous polynomial, was integrated. The method of integration is based on the reduction of the system to a hamiltonian system on $\mathbb{R}^3$ with Poisson-Nambu bracket defined by a circularly symmetric function $\mathcal{C}$. In this way one obtains the trajectories of the reduced system as intersections of the $0$-level surface $\mathcal{C}^{-1}(0)$ of $\mathcal{C}$, called in \cite{holm} the Kummer shape, with the level sets of the reduced Hamiltonian. Next, using this system, the KAM-theory is applied for the extraction of rigorous information for a class of more general Hamiltonians.

In this paper we use Manley-Rowe integrals of motion for integration of the above type of hamiltonian systems. As a result we obtain a Poisson subalgebra of the standard Poisson algebra $C^\infty (\mathbb{R}^{2N+2})$, which reduces to the Poisson algebra on the Kummer shape, see \eqref{casimir}, \eqref{nambu}). We call this reduced Poisson algebra the classical Kummer shape algebra. %He used integrals of motion of Manley-Rowe type for integration of the above hamiltonian system and obtained some Poisson subalgebra of standard Poisson algebra of smooth functions on $\mathbb{R}^{2N} \cong \mathbb{C}^N $. Following \cite{holm}, we will call this Poisson subalgebra the (classical) Kummer shape. 

The above method of integration works also when we pass to the corresponding quantum system of nonlinearly coupled harmonic oscillators. As a consequence, we obtain an operator algebra generated by three operators given by (\ref{ra0}-\ref{rastar}) in Section \ref{sec:qkummer}. These operators satisfy the relations (\ref{rel1}-\ref{rel2}) which depend on the structural function $\mathcal{G}_\hbar $ defined by the interaction part of the considered Hamiltonian \eqref{qH}. This algebra, called here a quantum Kummer shape algebra, describes the symmetry of the considered quantum system and, in the limit $\hbar \to 0$, corresponds to the classical Kummer shape algebra. We show this correspondence in Section \ref{sec:4} using the $*_\hbar$-product defined by the coherent state map of the reduced system. We will also show that this correspondence intertwines the classical and quantum reductions of the system, see Fig.2 presented at the end of Section \ref{sec:4}. The resolution of identity for the reduced coherent states is found by combining classical and quantum reduction procedures, see Proposition \ref{prop4} in Section \ref{sec:7}.

The quantum Kummer shape algebras were investigated in a systematic way in \cite{ATer}, \cite{AO}, \cite{AO2}, where the authors had in mind their applications to nonlinear phenomena in quantum optics. We believe it was V.P. Karassiov, see \cite{Kar1}, \cite{Kar2}, who first applied these structures to quantum optics problems.  They were also discussed in numerous papers by different authors as the so-called deformed bosonic oscillators algebras, e.g. see \cite{Jag1}, \cite{Jag2} and references therein.

The connection of these algebras with the coherent state method of quantization and the theory of $q$-special functions was investigated in \cite{AO1}.

In Section \ref{sec:integrability} we discuss the integrability of the quantum Hamiltonian \eqref{qH} and the correspondence of the Heisenberg equations defined by \eqref{qH} to the Hamilton equations (\ref{i_0t}-\ref{dyt}) if $\hbar \to 0$.

In Section \ref{sec:examples} several examples of Kummer shapes are presented which show their importance for various questions of mathematical physics.

\section{Classical Kummer shape algebra}\label{sec:kummer}

We will investigate a hamiltonian system consisting of non-linearly coupled $N+1$ harmonic oscillators. The phase space of such a system is $\mathbb{C}^{N+1}$ equipped with the standard Poisson bracket 

\begin{equation}\label{pb}
\{ f,g\}= -i \sum_{n=0}^N\left(\frac{\partial f}{\partial z_n}\frac{\partial g}{\partial \bar z_n}- \frac{\partial g}{\partial z_n}\frac{\partial f}{\partial \bar z_n}\right) ,
\end{equation}
of $f,g\in C^\infty(\mathbb{C}^{N+1})$. So, for the coordinate functions we have 
\begin{equation}\label{zety}
\{z_i,\bar z_j\}=-i\delta_{ij},\quad \{z_i,z_j\}=0,  \textrm{ and } \{\bar z_i,\bar z_j\}=0.
\end{equation}
We will take 
\begin{multline}\label{H} 
H=h_0(|z_0|^2,|z_1|^2,\ldots, |z_N|^2)+g_0(|z_0|^2, |z_1|^2,\ldots, |z_N|^2) z_0^{l_0} z_1^{l_1}\cdots z_N^{l_N}+\\
+g_0(|z_0|^2, |z_1|^2,\ldots, |z_N|^2)z_0^{-l_0} z_1^{-l_1}\cdots z_N^{-l_N}
\end{multline}
as a Hamiltonian of the system, where the following convention is used
\begin{equation}\label{zet}
 z^{l_i}_i=\left\{\begin{array}{ll}

z_i^{l_i} & \textrm{if }l_i\geq 0  \\ 

 \bar z^{-l_i}_i & \textrm{if }l_i<0

                      \end{array}\right. , 
\end{equation}
$z_i\in\mathbb{C}$ and $l_i\in\mathbb{Z}$. We will also assume that $h_0$ and $g_0$ are smooth functions on $\mathbb{C}^{N+1} \cong \mathbb{R}^{2N+2}$ and that $g_0$ is positive valued. Additionally, without loss of generality, we exclude the case when all exponents $l_0, \ldots, l_N$ are negative.

In our investigation we will follow Kummer's paper \cite{Kummer}, where the Hamiltonian \eqref{H}, with the coupling function $g_0$ as a real positive constant and $h_0$ as a polynomial of degree smaller than $|l_0|+ \ldots + |l_N|$, was considered. %Usually, see [ ] for example, one considers $h_0$ as a linear functions of $N+1$ real arguments and $g_0$ as a constant function. Then these functions describe the energy of the non-interacting harmonic oscillators and the coupling constant for interacting component of the Hamiltonian, respectively. In the following we will consider $h_0$ and $g_0$ as some smooth functions on $\mathbb{R}^{N+1}$.

In order to integrate the system given by Hamiltonian \eqref{H} we pass to the new canonical coordinates $(I_0, \ldots , I_N, \psi_0, \ldots , \psi_N ) $ defined by 
 \begin{equation}\label{int}
 I_k:= \sum_{j=0}^N \rho_{kj} |z_j|^2,
 \end{equation}
 \begin{equation}\label{psil}
\psi_l := \sum_{j=0}^N \kappa_{jl}\phi_j,
\end{equation}
 where the real $(N+1)\times (N+1)$ matrix $\rho= (\rho_{ij})$ satisfies
\begin{equation}\label{omega}
\det \rho \neq 0 \quad \mbox{and} \quad \sum_{j=0}^N \rho_{ij}l_j = \delta_{0i} 
\end{equation}
and $\phi_j$ is the argument of $z_j=|z_j|e^{i\phi_j}$. The matrix $\kappa= (\kappa_{ij}) $ is the inverse of $\rho =(\rho_{ij})$.

In the sequel, we will restrict our considerations to the open subset $\Omega^{N+1} := \{ (z_0, \ldots, z_N )^T \in \mathbb{C}^{N+1} : |z_k| > 0 , \mbox{ for } k=0,1,\ldots, N \}$ of $\mathbb{C}^{N+1}$, which is invariant with respect to the hamiltonian flows 
\begin{equation}\label{flow}
\sigma_r (t) (z_0, \ldots, z_N)= (e^{i\rho_{r0}t}z_0, \ldots , e^{i\rho_{rN}t}z_N ) , 
\end{equation}
generated by $I_r$, where $t\in \mathbb{R}$ and $r=0,1, \ldots, N$. The resonance condition \eqref{omega} implies that the flows $\sigma_r$ are periodic
\begin{equation}\label{potok}
\sigma_r (t+T_r ) = \sigma_r (t) 
\end{equation}
for $r=1,2, \ldots, N$. We will assume that $T_1, \ldots, T_N$ are minimal periods. Expressing $\sigma_r(t) $ in the coordinates $(I_0, \ldots, I_N, \psi_0, \ldots, \psi_N )$ we find that
\begin{equation}
\sigma_r (t) (I_0, \ldots, I_N, \psi_0, \ldots, \psi_N ) = (I_0, \ldots, I_N, \psi_0, \ldots, \psi_r +t, \ldots, \psi_N ). 
\end{equation}
From the above it follows that, for $r=1,2, \ldots, N$, one can assume 
\begin{equation}\label{ter}
0< \psi_r \leq T_r. 
\end{equation}
Because of \eqref{omega} the variable $\psi_0$ depends on $\phi_0, \ldots, \phi_N$ as follows 
\begin{equation}\label{psizerodef}
\psi_0 = \sum_{j=0}^N l_j \phi_j . 
\end{equation}
In particular 
\begin{equation}\label{te0}
2\pi \sum_{i\in N_n } l_i  < \psi_0 \leq 2\pi \sum_{i\in N_p} l_i , 
\end{equation}
where $N_n := \{ 0\leq i \leq N : l_i <0 \}$ and $N_p := \{ 0\leq i \leq N : l_i > 0 \}$.

According to \eqref{int} the coordinates $(I_0, \ldots, I_N)$ belong to the cone $\Lambda^{N+1}\subset \mathbb{R}^{N+1} $ defined by inequalities 
\begin{align}
 \notag l_0I_0 + \sum_{j=1}^N \kappa_{0j}I_j > & 0 ,\\
\label{dots}
 \cdots \qquad \qquad &   \\
\notag l_NI_0 + \sum_{j=1}^N \kappa_{Nj}I_j > & 0 . 
\end{align}

 In coordinates \eqref{int}, \eqref{psil} the Poisson bracket \eqref{pb} and the Hamiltonian \eqref{H} assumes the form 
\begin{equation}\label{pb2}
\{f,g\}=  \sum_{n=0}^N\left(\frac{\partial f}{\partial I_n}\frac{\partial g}{\partial \psi_n}- \frac{\partial g}{\partial I_n}\frac{\partial f}{\partial \psi_n}\right)
\end{equation}
and 
\begin{equation}\label{H2}
H= H_0(I_0, \ldots , I_N )+2 \sqrt{\mathcal{G}_0(I_0, \ldots , I_N)}\cos \psi_0 , 
\end{equation}
respectively. The functions $H_0(I_0, \ldots , I_N)$ and $\mathcal{G}_0(I_0, \ldots , I_N )$ are defined as the superposition of functions $h_0(|z_0|^2, \ldots , |z_N|^2 )$ and \\
$|g_0(|z_0|^2, \ldots , |z_N|^2)|^2(|z_0|^{2|l_0|}\ldots |z_N|^{2|l_N|})$ with the linear map 
\begin{equation}\label{kapa}
 |z_j|^2 = \sum_{k=0}^N \kappa_{jk}I_k ,
 \end{equation}
 i.e. 
 \begin{equation}\label{funkcjagiezero}
 \mathcal{G}_0 (I_0, \ldots, I_N ) = g_0 \left( \sum_{j=0}^N \kappa_{0j} I_j, \ldots , \sum_{j=0}^N \kappa_{Nj}I_j\right)^2 \left(\sum_{j=0}^N \kappa_{0j}I_j\right)^{l_0} \ldots \left(\sum_{j=0}^N \kappa_{Nj}I_j\right)^{l_N} . 
 \end{equation}
Hence, the Hamilton equations defined by \eqref{pb2} and \eqref{H2} come out as 
\begin{align}
\label{eq-i0}
 \frac{dI_0}{dt} &=2 \sqrt{\mathcal{G}_0 (I_0, \ldots , I_N)} \sin \psi_0 \\
\label{eq-ik}
 \frac{dI_k}{dt} &=0 \\
\label{eq-psi0}
 \frac{d\psi_0 }{dt} &=\frac{\partial H_0}{\partial I_0} (I_0, \ldots , I_N ) +  \frac{\partial \mathcal{G}_0}{\partial I_0} (I_0, \ldots , I_N ) \frac{\cos \psi_0 }{\sqrt{\mathcal{G}_0(I_0, \ldots , I_N)}} \\
\label{eq-psik}
 \frac{d\psi_k }{dt} &=\frac{\partial H_0}{\partial I_k} (I_0, \ldots , I_N ) + \frac{\partial \mathcal{G}_0}{\partial I_k} (I_0, \ldots , I_N ) \frac{\cos \psi_0 }{\sqrt{\mathcal{G}_0(I_0, \ldots , I_N)}} , 
 \end{align}
 where $k=1,2,\ldots , N$. Since,  
 \begin{equation}
\{ I_k , H\} = 0, 
\end{equation}
for $k=1,\ldots , N$, we will consider the integrals of motion $I_1, \ldots , I_N$ as the components of the momentum map 
\begin{equation}\label{mmap}
\textbf{J} (I_0, \ldots , I_N, \psi_0, \ldots , \psi_N ) = \left(\begin{array}{c}
I_1 \\
\vdots \\
I_N 
\end{array}\right) ,
\end{equation}
where we identified $\mathbb{R}^N$ with the dual of Lie algebra of the $N$-dimensional torus $\mathbb{T}^N= \mathbb{S}^1\times \ldots \times \mathbb{S}^1$.

It is a consequence of \eqref{omega} that the momentum map $\textbf{J} : \Omega^{N+1} \to \mathbb{R}^N $ is a submersion. So, the level set $\textbf{J}^{-1}(c_1, \ldots , c_N )$ of $(c_1, \ldots , c_N)^T \in \textbf{J}(\Omega^{N+1})$ is a real submanifold of $\Omega^{N+1}$. In order to describe the structure of $\textbf{J}^{-1}(c_1, \ldots, c_N)$ we note that from \eqref{dots} it follows that
\begin{equation}
a< I_0 < b, 
\end{equation}
where 
\begin{align}
a:= \max_{i \in N_p } \left\{-\frac{1}{l_i}\sum_{j=1}^N \kappa_{ij}c_j \right\}, \qquad 
b:= \min_{i\in N_n} \left\{ -\frac{1}{l_i}\sum_{j=1}^N \kappa_{ij}c_j \right\} ,
\end{align}
if $(I_0, I_1, \ldots, I_N, \psi_0, \psi_1, \ldots, \psi_N ) \in \textbf{J}^{-1}(c_1, \ldots, c_N)$. Recall here that $\psi_0 $ and $\psi_1, \ldots, \psi_N $ are limited by \eqref{te0} and \eqref{ter}, respectively. So, we have $\textbf{J}^{-1}(c_1, \ldots, c_N) \cong ]a,b[ \times \mathbb{S}^1 \times \mathbb{T}^N$ and thus, $\textbf{J}^{-1}(c_1, \ldots, c_N ) /\mathbb{T}^N \cong ]a, b[ \times \mathbb{S}^1$. Summarizing, we observe that $\textbf{J}^{-1}(c_1, \ldots , c_N) \to \textbf{J}^{-1}(c_1, \ldots , c_N )/ \mathbb{T}^N $ is a trivial $\mathbb{T}^N$-pricipal bundle over the reduced symplectic manifold $\textbf{J}^{-1}(c_1, \ldots , c_N )/ \mathbb{T}^N $.

We note that $(I_0, \psi_0)$ are canonical coordinates on $\textbf{J}^{-1}(c_1, \ldots , c_N )/ \mathbb{T}^N $, i.e. the reduced Poisson bracket of $F, G \in C^\infty (\textbf{J}^{-1}(c_1, \ldots , c_N )/ \mathbb{T}^N )$ is given by 
\begin{equation}\label{pb3}
\{F, G\} = \frac{\partial F}{\partial I_0}\frac{\partial G}{\partial \psi_0} - \frac{\partial G}{\partial I_0}\frac{\partial F}{\partial \psi_0} .
\end{equation}
Therefore, the Hamilton equations (\ref{eq-i0}-\ref{eq-psik}) reduce to 
\begin{align}
\label{eq-i02}
 \frac{dI_0}{dt} &=2 \sqrt{\mathcal{G}_0 (I_0, c_1\ldots , c_N)} \sin \psi_0 , \\
\label{eq-psi02}
 \frac{d\psi_0 }{dt} &=\frac{\partial H_0}{\partial I_0} (I_0, c_1\ldots , c_N ) +  \frac{\partial \mathcal{G}_0}{\partial I_0} (I_0,c_1 \ldots , c_N ) \frac{\cos \psi_0 }{\sqrt{\mathcal{G}_0(I_0, c_1,  \ldots , c_N)}} ,
 \end{align}
and one can integrate them by quadratures using  
\begin{equation}\label{energy-law}
H_0(I_0, c_1, \ldots , c_N )+2 \sqrt{\mathcal{G}_0(I_0, c_1,  \ldots , c_N)}\cos \psi_0 = E= const . 
\end{equation}
Namely, from \eqref{eq-i02} and \eqref{energy-law} one obtains the differential equation 
\begin{equation}\label{difi0}
\left(\frac{dI_0}{dt} (t)\right)^2 = 4 \mathcal{G}_0 (I_0 (t), c_1, \ldots , c_N ) - \left(E- H_0(I_0 (t), c_1, \ldots , c_N ) \right)^2, 
\end{equation}
on $I_0(t)$, which could be solved by the separation of variables. Next, substituting $I_0(t)$ into \eqref{eq-psi02} we find $\psi_0 (t)$. Having obtained $I_0(t)$ and $\psi_0(t)$ we find $\psi_k(t)$ integrating both sides of \eqref{eq-psik}.

In order to visualize the geometry of the reduced symplectic manifold $\textbf{J}^{-1}(c_1, \ldots , c_N )/ \mathbb{T}^N $ let us introduce a map $\mathcal{F}:\Omega^{N+1} \to \mathbb{C}$ given by
\begin{multline}\label{cfz}
z=x+iy =\mathcal{F}(z_0, \ldots , z_N) := g_0(|z_0|^2,|z_1|^2,\ldots, |z_N|^2) z_0^{l_0}\cdots z_N^{l_N}=\\
 = \sqrt{\mathcal{G}_0(I_0, \ldots , I_N)} e^{i\psi_0} ,
 \end{multline}
which is constant on the orbits of $\mathbb{T}^N$ and thus, can be considered as a function of arguments $I_0, \ldots , I_N, \psi_0$. We observe that the variables $I_0, I_1, \ldots, I_N, x$ and $y$ are functionally closed with respect to the Poisson bracket \eqref{pb}, i.e. one has
\begin{equation}\label{pbx0x}
\{I_0,x\}=- y,
\end{equation}

\begin{equation}\label{pbx0y}
 \{I_0,y\}= x,
 \end{equation}
 
\begin{equation}\label{pbxy} 
\{x,y\}= \frac{1}{2} \frac{\partial \mathcal{G}_0}{\partial I_0}(I_0,I_1,\ldots I_N),
\end{equation}
 \begin{equation}\label{2.18}
 \{I_k,x\}=\{I_k,y\}=0,
 \end{equation}
 for $k,l=1,2, \ldots , N$.

Therefore, the variables $I_0, \ldots, I_N, x,y$ generate a Poisson subalgebra $\mathcal{K}_{\mathcal{G}_0} (\Omega^{N+1}) $ of the standard Poisson algebra $(C^\infty (\Omega^{N+1} ), \{ \cdot, \cdot \})$ of all smooth functions on $\Omega^{N+1} $. Let us note here that $\mathcal{K}_{\mathcal{G}_0}(\Omega^{N+1})$ is determined by the function $\mathcal{G}_0\in C^\infty (\Omega^{N+1})$ which depends on the choice of $g_0 \in C^\infty (\mathbb{R}^{N+1})$ and $(l_0, \ldots , l_N)^T \in \mathbb{Z}^{N+1}$. The Hamiltonian \eqref{H} belongs to $\mathcal{K}_{\mathcal{G}_0} (\Omega^{N+1}) $ and integrals of motion $I_1, \ldots , I_N$ generate the center of $\mathcal{K}_{\mathcal{G}_0} (\Omega^{N+1}) $. %, i.e. a function group in Sophus Lie terminology, see \cite{SLie}. From \eqref{H3} and \eqref{2.18} we see that $H \in \mathcal{K}_{\mathcal{G}_0} (\mathbb{C}^{N+1})$ and the integrals of motion $I_1, \ldots , I_N $ belong to the center of $\mathcal{K}_{\mathcal{G}_0} (\mathbb{C}^{N+1})$. 

Since functions $x,y, I_0 \in C^\infty (\Omega^{N+1})$ are invariants of $\mathbb{T}^N $, they define the corresponding functions on the reduced phase space $\textbf{J}^{-1}(c_1, \ldots , c_N) / \mathbb{T}^N $. Hence, there is a map 
\begin{equation}
\Phi_{c_1, \ldots, c_N} (I_0, \psi_0) := \left(\begin{array}{c} 
\sqrt{\mathcal{G}_0(I_0, c_1, \ldots, c_N)} \cos \psi_0 \\
\sqrt{\mathcal{G}_0(I_0, c_1, \ldots, c_N)} \sin \psi_0\\
I_0 
\end{array}\right) 
\end{equation}
 of $]a,b[ \times \mathbb{S}^1$ onto the circularly symmetric surface $\mathcal{C}^{-1}(0)$ in $\mathbb{R}^2\times ]a,b[$ given by the equation
\begin{equation}\label{casimir}
\mathcal{C}(x,y, I_0):=-\frac12 (x^2+y^2 - \mathcal{G}_0(I_0,c_1,\ldots, c_N ) ) =0
\end{equation}
 on $(x,y,I_0)^T \in \mathbb{R}^2\times ]a,b[$.  It follows from \eqref{te0} that $\Phi_{c_1, \ldots, c_N}$ is a $\sum_{i=0}^N |l_i|$-fold covering of $\mathcal{C}^{-1}(0)$. Following \cite{holm} we will call $\mathcal{C}^{-1}(0) $ the \textbf{Kummer shape}. The special case $H_0=0$, $g_0=const$ and  $(l_0, l_1, l_2)=(1,-1,1)$ was investigated in \cite{Luther}, \cite{Alber}, \cite{Alber1}, where $\mathcal{C}^{-1}(0)$ was called a three-wave surface.

Taking the Poisson algebra $(C^\infty (\mathbb{R}^3), \{\cdot , \cdot \}_\mathcal{C})$ with the Nambu bracket 
 \begin{equation}\label{nambu}
\{f,g\}_\mathcal{C}:=\det [\nabla \mathcal{C},\nabla f, \nabla g]
\end{equation}
as the Poisson bracket, where $f,g \in C^\infty (\mathbb{R}^3)$ and $ \nabla f = \left( \frac{\partial f}{\partial x} , \frac{\partial f}{\partial y}, \frac{ \partial f}{\partial I_0 } \right)^T$, we find that $\Phi_{c_1, \ldots, c_N}:\textbf{J}^{-1}(c_1, \ldots , c_N) / \mathbb{T}^N \to \mathbb{R}^3$ is a Poisson map. This fact is a direct consequence of the relations (\ref{pbx0x}-\ref{pbxy}) and the reduction procedure. Note that $\mathcal{C}$ is a Casimir for the Poisson bracket $\{\cdot, \cdot \}_{\mathcal{C}}$. So, the Kummer shape $\mathcal{C}^{-1}(0)$  is a symplectic leaf and $\Phi_{c_1, \ldots, c_N}:\textbf{J}^{-1}(c_1, \ldots , c_N) / \mathbb{T}^N \to \mathcal{C}^{-1}(0)$ is a symplectic map.  We recall that $\mathcal{G}_0(I_0, c_1, \ldots , c_N) >0$, which implies that $\mathcal{C}^{-1}(0)$  does not have singular points, i.e. it is a submanifold of $\mathbb{R}^3$

The functions $x,y, I_0 \in \mathcal{K}_{\mathcal{G}_0}(\Omega^{N+1})$ after reduction to $\textbf{J}^{-1}(c_1, \ldots , c_N)/ \mathbb{T}^N$ satisfy \eqref{pbx0x}, \eqref{pbx0y}) as well as 
\begin{equation}
\{x, y \} = \frac{1}{2} \frac{\partial \mathcal{G}_0}{\partial I_0}(I_0, c_1, \ldots , c_N) .
\end{equation}
Thus and from the definition of $\{\cdot , \cdot \}_\mathcal{C}$ %given in \eqref{pbx0x}-\eqref{pbxy} 
it follows that they generate a Poisson algebra $\mathcal{K}_{\mathcal{G}_0}(c_1, \ldots , c_N) $ isomorphic to $(C^\infty (\mathbb{R}^3), \{\cdot , \cdot \}_{\mathcal{C}}) $. This Poisson algebra is the reduction of the Poisson subalgebra $\mathcal{K}_{\mathcal{G}_0}(\Omega^{N+1}) \subset C^\infty (\Omega^{N+1})$. We shall call $\mathcal{K}_{\mathcal{G}_0}(c_1, \ldots , c_N)$ the \textbf{classical Kummer shape algebra}.

The Hamiltonian \eqref{H2} in coordinates $(x, y, I_0)$ has the form 
\begin{equation}\label{H3}
H= H_0(I_0, c_1, \ldots, c_N) +2x 
\end{equation}
and the corresponding 
Hamilton equations for $x,y$ and $I_0$ are
\begin{align}\label{i_0t}
\frac{dI_0}{dt} &=2y \\
\frac{dx}{dt} &=-y \frac{\partial H_0}{\partial I_0} (I_0, c_1, \ldots , c_N) \\
\label{dyt} \frac{dy}{dt} &=x \frac{\partial H_0}{\partial I_0} (I_0, c_1, \ldots, c_N) + \frac{\partial \mathcal{G}_0}{\partial I_0} (I_0, c_1, \ldots , c_N ). 
\end{align}
They lead to the dependences: 
\begin{equation}\label{xt}
x(t) = \frac{1}{2} \left(E- H_0(I_0(t), c_1, \ldots , c_N)\right)
\end{equation}
\begin{equation}\label{yt}
y(t) = \frac{1}{2} \frac{dI_0}{dt} (t) ,
\end{equation}
where $I_0(t)$ is the solution of equation \eqref{difi0}.

Let us note also that the trajectory $ \mathbb{R}\ni t\mapsto (x(t), y(t), I_0(t))^T \in \mathcal{C}^{-1}(0)$, defined by (\ref{i_0t}-\ref{dyt}), can also be obtained as the intersection $\mathcal{C}^{-1}(0) \cap H^{-1}(E) $ of the Kummer shape $\mathcal{C}^{-1}(0)$ with the level set $H^{-1}(E)$ of the Hamiltonian \eqref{H3}. 
%Integrating both sides of \eqref{ph} we find $\phi_k(t)$.

\section{Quantum Kummer shape algebra}\label{sec:qkummer}

Following \cite{AO},  we will repeat the considerations of the previous section in the case of a quantum analogue of the classical hamiltonian system presented there. As the quantum counterpart of the Hamiltonian \eqref{H} we take

\begin{multline}\label{qH}
\textbf{H}= h_0(a_0^*a_0, ... , a_N^*a_N) + g_0(a_0^*a_0, ..., a_N^*a_N ) a_0^{l_0}... a_N^{l_N} +\\
+ a_0^{-l_0}...a_N^{-l_N}g_0(a_0^*a_0, ..., a_N^*a_N), 
\end{multline}
where, similarly to \eqref{zet}, we have employed the convention 
\begin{equation}
 a^{l_i}_i=\left\{\begin{array}{ll}

a^{l_i}_i & \textrm{if }l_i\geq 0  \\ 

 (a_i^*)^{-l_i} & \textrm{if }l_i<0

                      \end{array}\right. .
\end{equation}

Everywhere below we will keep Planck's constant $\hbar $ in the canonical commutation relations

\begin{equation}\label{comm}
[a_i, a_j^*] = \hbar \delta_{ij}, \qquad [a_i, a_j] = 0, \qquad [a_i^*, a_j^*] = 0, 
\end{equation}
for the annihilation and creation operators.

Following the classical case (see \eqref{int} and \eqref{cfz}), we introduce the operators %In order to study the quantum system described by the Hamiltonian \eqref{qH} we introduce the operators

\begin{equation}\label{ao}
A:= g_0(a_0^*a_0,..., a_N^*a_N) a_0^{l_0}...a_N^{l_N}, 
\end{equation}

\begin{equation}\label{ai}
A_i:= \sum_{j=0}^N \rho_{ij}a_j^*a_j,  
\end{equation}
where $ i=0,1,...,N$, and  $\rho = (\rho_{ij} )$ is the matrix defined in \eqref{omega}.
The operators $A_0, A_1, ... , A_N, A, A^*$ satisfy the commutation relations
\begin{equation}\label{comrel}
[A_0, A] = -\hbar A, \qquad [A_0, A^*] = \hbar A^*, 
\end{equation}
\begin{equation}
[A, A_i]= [A^*, A_i] = 0, 
\end{equation}
where $i=1,2,..., N$, and 
\begin{equation}\label{comij}
[A_i, A_j] = 0 , 
\end{equation}
where $i,j=0,1,..., N$. 
From \eqref{comm} and \eqref{ao} one obtains
\begin{equation}\label{8a}
A^*A = |g_0(a_0^*a_0-\hbar l_0,..., a_N^*a_N-\hbar l_N)|^2\mathcal{P}_{l_0}(a_0^*a_0-\hbar l_0 )... \mathcal{P}_{l_N}(a_N^*a_N- \hbar l_N ), 
\end{equation}
\begin{equation}\label{a8}
AA^* = |g_0(a_0^*a_0,..., a_N^*a_N)|^2\mathcal{P}_{l_0}(a_0^*a_0)... \mathcal{P}_{l_N}(a_N^*a_N ), 
\end{equation}
where the polynomials $\mathcal{P}_{l_i}$ are defined by 
\begin{equation}
\mathcal{P}_{l_i}(a_i^*a_i):=\left\{\begin{array}{ll}

a^{l_i}_i(a_i^*)^{l_i} =(a_i^*a_i+\hbar )...(a_i^*a_i +l_i\hbar )& \textrm{if }l_i > 0  \\ 

1 & \textrm{if }l_i=0 \\

 (a_i^*)^{-l_i}a_i^{-l_i}=a_i^*a_i(a_i^*a_i-\hbar )... (a_i^*a_i -(-l_i-1)\hbar ) & \textrm{if }l_i<0

                      \end{array}\right.
\end{equation}

Replacing in \eqref{a8} and \eqref{8a} the occupation number operators \\
$a_0^*a_0,... , a_N^*a_N$ by the operators $A_0,A_1, ... , A_N$  one finds 
\begin{equation}\label{aa}
AA^*= \mathcal{G}_\hbar (A_0, A_1,..., A_N), 
\end{equation}
\begin{equation}\label{aaa}
A^*A= \mathcal{G}_\hbar (A_0-\hbar, A_1,..., A_N), 
\end{equation}
where the function $\mathcal{G}_\hbar $ is defined by 
\begin{multline}\label{ghf}
\mathcal{G}_\hbar (A_0, \ldots , A_N ) := \\
g_0\left(\sum_{j=0}^N \kappa_{0j}A_j,..., \sum_{j=0}^N \kappa_{Nj}A_j\right)^2 \mathcal{P}_{l_0}\left(\sum_{j=0}^N \kappa_{0j}A_j\right)... \mathcal{P}_{l_N}\left(\sum_{j=0}^N \kappa_{Nj}A_j \right).
\end{multline}
%uniquely determined by function $g_0$ and polynomials $\mathcal{P}_{l_0}, ..., \mathcal{P}_{l_N}$. 
%Denoting by $\mathcal{G}_\hbar (A_0) = \tilde{\mathcal{G}_\hbar }(A_0-\hbar)$ one can rewrite \eqref{aa1} and \eqref{aaa1} as 
%\end{equation}%Note that by the condition \eqref{omega} on $\rho $-matrix the Planck constant $\hbar $ does not appear in \eqref{aa} and in \eqref{aaa} except the first argument of $\mathcal{G}$.  
In terms of the operators $A_0, A_1, ..., A_N, A, A^*$ the Hamiltonian \eqref{qH} is written as follows
\begin{equation}\label{rH}
\textbf{H}= H_0(A_0, A_1, \ldots , A_N ) + A+ A^*, 
\end{equation}
where the function $H_0$ is defined as the superposition of the function $h_0$ with the linear map inverse to \eqref{ai}.

One can easily see that $[A_i, \textbf{H}]= 0$ for $i=1,2,\ldots , N$. So we have commuting integrals of motion: $A_1, \ldots , A_N, $ which also commute with $A_0$. Notice here that the operators $A_0,A_1, \ldots , A_N$ are diagonalized in the standard Fock basis 
\begin{equation}\label{Fockbasis}
\left| n_0, n_1, \ldots , n_N \right> := \frac{1}{\sqrt{n_0!\ldots  n_N!}}\hbar^{-\frac{1}{2}(n_0+...+n_N)}(a_0^*)^{n_0}\ldots (a_N^*)^{n_N} |0,\ldots , 0\rangle ,  
\end{equation}
where $n_i \in \mathbb{Z}_+\cup \{ 0\}$, with the eigenvalues $c_0, c_1, \ldots , c_N $ related to $n_0, n_1, \ldots, n_N$ by
\begin{equation}\label{lambdai}
 c_i= \hbar \sum_{j=0}^N \rho_{ij} n_j , \qquad i=0,1,..., N. 
\end{equation}
We will use them for a new parametrization $\{|c_0, c_1 , \ldots , c_N \rangle \}$ of the Fock basis $\{|n_0, n_1,..., n_N \rangle \}$.

Taking into account the above, we can reduce the quantum system described by the Hamiltonian \eqref{qH} to the Hilbert subspace $\mathcal{H}_{c_1 , \ldots , c_N }\subset \mathcal{H} $ spanned by the eigenvectors $|c_0 , c_1 , \ldots , c_N \rangle $ of $A_0$, %where $n\in \mathbb N \cup \{0\}$, and for which eigenvalues $\hbar c_1,  \ldots , \hbar c_N$ of the operators $A_1, \ldots , A_N$ are fixed. 
with fixed eigenvalues $ c_1,  \ldots ,  c_N$ of the operators $A_1, \ldots , A_N$. Let us describe the dependence of $|c_0 , c_1 , \ldots , c_N \rangle \in \mathcal{H}_{c_1 , \ldots , c_N }$ on the choice of the exponents $l_0, l_1, \ldots , l_N \in \mathbb{Z}$. First we recall that the vectors $|n_0, n_1, \ldots , n_N \rangle $, where $n_0, n_1, \ldots , n_N \in \mathbb{N}\cup \{0\}$, form the set of all common eigenvectors for the operators $A_0, A_1, \ldots , A_N$. It follows from \eqref{omega} and \eqref{lambdai} that if $ |n_0, n_1, \ldots , n_N \rangle ,  | v_0, v_1, \ldots , v_N \rangle , \in \mathcal{H}_{c_1 , \ldots , c_N }$, then there exists $n\in \mathbb{Z}$ such that 
\begin{equation}\label{enki}
\left(\begin{array}{c}
n_0 \\
n_1 \\
\vdots \\
n_N 
\end{array}\right) = 
\left(\begin{array}{c}
v_0 \\
v_1 \\
\vdots \\
v_N 
\end{array}\right) + n
\left(\begin{array}{c}
l_0 \\
l_1 \\
\vdots \\
l_N 
\end{array}\right) .
\end{equation}
From \eqref{omega} and the equation \eqref{lambdai} and \eqref{enki} we find that the eigenvalues $ \tilde{c}_0, \tilde{c}_1 , \ldots , \tilde{c}_N $ and $ c_0, c_1, \ldots , c_N $ of $A_0, A_1, \ldots, A_N $ corresponding to $|n_0, \ldots, n_N \rangle $ and $|v_0, \ldots, v_N \rangle $, respectively, are related by 
\begin{equation}
\left(\begin{array}{c}
\tilde{c}_0 \\
\tilde{c}_1 \\
\vdots \\
\tilde{c}_N 
\end{array}\right) = 
\left(\begin{array}{c}
c_0 +\hbar n \\
c_1 \\
\vdots \\
c_N 
\end{array}\right) ,
\end{equation}
where the parameters $c_1, \ldots, c_N$ are fixed after reduction to $\mathcal{H}_{c_1, \ldots, c_N}$.

In the following we assume that $c_0 $ satisfies 
\begin{equation}\label{v1}
\mathcal{G}_\hbar (c_0-\hbar,  c_1,\ldots ,  c_N) = 0, 
\end{equation}
which is equivalent to the assumption that $|c_0, c_1, \ldots , c_N \rangle $ is a vacuum state of the annihilation operator $\textbf{A}$, i.e. one has %$\mathcal{Q}_{\mathcal{G}_\hbar } (\mathcal{H}_{c_1 , \ldots ,  c_N } )$, i.e. one has
\begin{equation}\label{vacuum}
 \textbf{A}|c_0, c_1, \ldots , c_N \rangle =0 .
\end{equation}
The equivalence of the conditions \eqref{v1} and \eqref{vacuum} can be easily seen from the relation \eqref{aaa}. Let us note that %and $\textbf{A}^*$ generates  from $|c_0, c_1, \ldots , c_N \rangle $ 
the basis of Hilbert subspace $\mathcal{H}_{c_1 , \ldots ,  c_N }$ is generated from $|c_0, c_1, \ldots , c_N \rangle $ by the creation operator $\textbf{A}^*$.

The condition \eqref{vacuum} forces $n$ to be non-negative. The upper bound on $n$ depends on the choice of $(l_0, l_1, \ldots , l_N)^T \in \mathbb{Z}^{N+1}$ in the following way. If $l_k \geq 0$ for all $k\in \{0,1, \ldots , N\}$, then $n$ runs over all $\mathbb{N}\cup \{0\}$ and thus, $\dim \mathcal{H}_{c_1 , \ldots ,  c_N } = \infty $. In the opposite case one has $0 \leq n \leq L$, where $L\in \mathbb{N}$, so $\mathcal{H}_{c_1 , \ldots ,  c_N }$ is isomorphic to $\mathbb{C}^{L+1}$. We mention here that $L$ depends on the choice of the vacuum vector $|c_0, c_1, \ldots , c_N \rangle$. We refer to \cite{AO}, where more detailed discussion of the above facts can be found.

It follows from the above that the operators $A_0, A, A^*$ after reduction to $\mathcal{H}_{c_1 , \ldots ,  c_N }$ are given by
\begin{equation}\label{ra0}
\textbf{A}_0|c_0 +\hbar n, c_1, \ldots , c_N \rangle = (c_0+\hbar n) |c_0 + \hbar n, c_1, \ldots , c_N \rangle
\end{equation}
\begin{equation}
\textbf{A}|c_0 +\hbar n, c_1, \ldots , c_N \rangle = 
\sqrt{\mathcal{G}_\hbar (c_0+\hbar (n-1),  c_1, \ldots ,  c_N)}|c_0 +\hbar (n-1), c_1, \ldots , c_N \rangle
\end{equation}
\begin{equation}\label{rastar}
\textbf{A}^*|c_0 +\hbar n, c_1, \ldots , c_N \rangle = \sqrt{\mathcal{G}_\hbar ( c_0+\hbar n,  c_1, \ldots ,  c_N)}|c_0 +\hbar (n+1), c_1, \ldots , c_N \rangle.
\end{equation}

We denote by $\mathcal{Q}_{\mathcal{G}_\hbar } (\mathcal{H}_{c_1 , \ldots ,  c_N } )$ the operator algebra generated by the reduced operators $\textbf{A}, \textbf{A}^*$ and $\textbf{A}_0$. In accordance with the classical case, we will call this algebra a \textbf{quantum Kummer shape algebra}.

%For $\mathcal{H}_{c_1 , \ldots ,  c_N } \cong \mathbb{C}^{L+1}$ the quantum Kummer shape is isomorphic to the algebra of $(L+1)\times (L+1)$ complex matrices. %We address to Proposition 1.1. in \cite{AO}, where are presented the necessary conditions on the exponents $l_0, l_1, \ldots , l_N \in \mathbb{Z} $ which guarantee that $\dim \mathcal{H}_{c_1, \ldots , c_N } = \infty $. 

After reduction to $\mathcal{H}_{c_1, \ldots , c_N }$, the relations \eqref{comrel} and \eqref{aa}, \eqref{aaa} for $\textbf{A}_0$, the annihilation $\textbf{A}$ and the creation $\textbf{A}^*$ operators are given now by 
\begin{equation}\label{rel1}
[\textbf{A}_0, \textbf{A}] = -\hbar \textbf{A}, \qquad [\textbf{A}_0, \textbf{A}^*] = \hbar \textbf{A}^*, 
\end{equation}
\begin{equation}\label{rel3}
\textbf{A}^*\textbf{A} = \mathcal{G}_\hbar (\textbf{A}_0 -\hbar,  c_1, \ldots ,  c_N )
\end{equation}
\begin{equation}\label{rel2}
\textbf{A}\textbf{A}^* = \mathcal{G}_\hbar (\textbf{A}_0 , c_1 , \ldots ,  c_N ). 
\end{equation}

After reduction to $\mathcal{H}_{c_1 , \ldots ,  c_N }$ the Hamiltonian \eqref{qH}  is expressed by $\textbf{A}_0, \textbf{A} $ and $\textbf{A}^*$ as follows
\begin{equation}\label{Hr}
\textbf{H}= H_0(\textbf{A}_0, c_1, \ldots ,  c_N ) +\textbf{A}+ \textbf{A}^*.
\end{equation}

As we have shown, two complementary subcases are possible :\\
\textbf{(i)} the vector $(l_0, \ldots , l_N )^T \in \mathbb{Z}^{N+1}$ has at least two components of the different sign; \\
\textbf{(ii)} all nonzero components of $(l_0, \ldots , l_N )^T $ are positive. \\

In the case \textbf{(i)} one has the splitting 
\begin{equation}\label{split}
\mathcal{H} = \bigoplus_{c_1, \ldots , c_N } \mathcal{H}_{c_1, \ldots , c_N }
\end{equation}
of the Hilbert space $\mathcal{H}$ into finite dimensional Hilbert subspaces which are invariant with respect to the Hamiltonian \eqref{Hr}, i.e. $\textbf{H}\mathcal{H}_{c_1, \ldots , c_N } \subset \mathcal{H}_{c_1, \ldots , c_N }$. Hence, one can reduce the spectral decomposition of the Hamiltonian $\textbf{H}$ to the spectral decompositions of three-diagonal hermitian matrices $\textbf{H}\mid_{\mathcal{H}_{c_1, \ldots , c_N }} : \mathbb{C}^{L+1} \to \mathbb{C}^{L+1}$. The Hamiltonians of this type describe various non-linear phenomena in quantum optics, e.g see \cite{AO2}, \cite{ATer}. See also Example \ref{krawtchouk}, which is presented in Section \ref{sec:examples}. %Such situation happens in quantum non-linear optics, e.g. see \cite{AO2}. See also  Example (?) in Section 5. 

Finally, let us note that in this case the quantum Kummer shape $\mathcal{Q}_{\mathcal{G}_\hbar } (\mathcal{H}_{c_1, \ldots , c_N } )$ is isomorphic to the algebra of $(L+1)\times (L+1)$ complex matrices.

In the case \textbf{(ii)} all components of the splitting \eqref{split} are infinite dimensional Hilbert subspaces.% is the most interesting from the analytical point of view. 
 Therefore, Hamiltonian \eqref{qH} splits now into the Jacobi operators, see \cite{Tesch}. Hence, in some special cases, it can be integrated using the theory of orthogonal polynomials, e.g. see Section V in \cite{AO2}. %Dependently on the choice of structural function $\mathcal{G}_\hbar $, the quantum Kummer shape $\mathcal{Q}_{\mathcal{G}_\hbar } (\mathcal{H}_{c_1, \ldots , c_N } ) $ can be generated in this case by bounded as well as unbounded operators in $\mathcal{H}_{c_1, \ldots , c_N } $.

The operator algebra $\mathcal{Q}_{\mathcal{G}_\hbar } (\mathcal{H}_{c_1, \ldots , c_N })$ is defined by the choice of the coupling function $g_0$ and vector $(l_0, l_1, \ldots , l_N )^T \in \mathbb{Z}^{N+1}$, which determine the structural function $\mathcal{G}_\hbar $. Having fixed  the structural function $\mathcal{G}_\hbar $ and Hilbert subspace $\mathcal{H}_{c_1, \ldots , c_N}$, one can forget about the previous steps of our construction and consider $\mathcal{Q}_{\mathcal{G}_\hbar } (\mathcal{H}_{c_1, \ldots , c_N })$ as an object depending only on the choice of $\mathcal{G}_\hbar $.

As we see from \eqref{ra0}, if $\dim_{\mathbb{C}}\mathcal{H}_{c_1, \ldots, c_N}= \infty$ then $\textbf{A}_0$ is an unbounded operator. However, depending on the choice of $\mathcal{G}_\hbar $, the annihilation and creation operators $\textbf{A}$ and $\textbf{A}^*$ can both be bounded. For example, one can assume that 
\begin{equation}
\mathcal{G}_\hbar (\textbf{A}_0,  c_1, \ldots ,  c_N ) = \mathcal{R} (q^{\frac{1}{\alpha} (\textbf{A}_0 - c_0 )}), 
\end{equation}
where $0< q <1 $ and $\alpha $ is a constant which has dimension of an action. If $\mathcal{R} $ is bounded on the interval $q^{-\frac{1}{\alpha}c_0} [0, 1]$, then $\textbf{A}$ and $\textbf{A}^*$ are bounded operators. Taking the bounded operator 
\begin{equation}\label{qoperator}
\textbf{Q}:= q^{\frac{1}{\alpha} (\textbf{A}_0 - c_0) } , 
\end{equation}
instead of $\textbf{A}_0$ we rewrite the relations (\ref{rel1}-\ref{rel2}) as follows 
\begin{equation}\label{com1}
\textbf{A}\textbf{Q}=q^{\frac{\hbar}{\alpha }}\textbf{Q}\textbf{A}, \quad \textbf{Q}\textbf{A}^*= q^{\frac{\hbar}{\alpha}}\textbf{A}^*\textbf{Q},
\end{equation}
\begin{equation}\label{com2}
\textbf{A}^*\textbf{A} = \mathcal{R}(q^{-\frac{\hbar}{\alpha}}\textbf{Q}), \quad \textbf{A}\textbf{A}^* = \mathcal{R}(\textbf{Q}).
\end{equation}
%where the structural function $\mathcal{R} $ is defined by
%\begin{equation}
%\mathcal{R}(\textbf{Q}) := \mathcal{G}_\hbar(\frac{\hbar }{\ln q} \ln \textbf{Q} + \hbar c_0 ). 
%\end{equation}

The operator $C^*$-algebras defined by the above relations were investigated in \cite{AO1}. They are strictly related to the theory of the basic hypergeometric series and have many interesting applications. We will illustrate this in Section \ref{sec:examples} presenting a suitable example. Taking in \eqref{com2} the quadratic polynomials as the structural function $\mathcal{R}$  we obtain various Podle\'s hemispheres \cite{Podles1}.

\section{Coherent states, star product and reduction}\label{sec:4}

In this section we will discuss the correspondence between classical and quantum Kummer algebras. Our consideration will be based on the notion of a coherent state map, see \cite{cohS}, i.e. a symplectic map $\mathcal{K}: M \to \mathbb{C}\mathbb{P}(\mathcal{H})$ of the classical phase space $M$ (a symplectic manifold) into the quantum phase space $\mathbb{C}\mathbb{P}(\mathcal{H})$, as well as on the related notion of a covariant symbol of an operator. In particular, we will show that this correspondence is consistent with the procedures of classical and quantum reduction.

In the beginning we recall the definition of the Glauber coherent states for a system of $N+1$ non-interacting modes (harmonic oscillators): 
\begin{equation}\label{Glauberstate}
|z_0, \ldots , z_N \rangle := \sum_{n_0,..., n_N=0}^\infty \frac{z_0^{n_0} \cdots z_N^{n_N}}{\sqrt{n_0! \ldots n_N! }} \hbar^{-\frac{1}{2}(n_0+...+n_N)} |n_0, \ldots , n_N \rangle ,
\end{equation}
where $z_0, \ldots , z_N \in \mathbb{C}$ and $|n_0, \ldots , n_N \rangle $ are the elements of the Fock basis of the Hilbert space $\mathcal{H} $, defined in \eqref{Fockbasis}.

The covariant symbol $\langle F \rangle : \mathbb{C}^{N+1} \to \mathbb{C} $ of an operator 
\begin{equation}
F= \sum_{m_0,..., m_N,n_0, ..., n_N=0}^\infty f_{\bar m_0, ..., \bar m_N , n_0, ..., n_N } (a_0^*)^{m_0}... (a_N^*)^{m_N} a_0^{n_0} ... a_N^{n_N}
\end{equation}
is defined by the mean value 
\begin{multline}\label{covf}
\langle F \rangle (\bar z_0, ... , \bar z_N , z_0, ..., z_N ) := \frac{\langle z_0, ..., z_N |F| z_0, ... , z_N \rangle }{\langle z_0, ..., z_N | z_0, ... , z_N \rangle } \\
= \sum_{m_0,..., m_N, n_0, ..., n_N=0}^\infty f_{\bar m_0, ..., \bar m_N , n_0, ..., n_N } (\bar z_0)^{m_0}... (\bar z_N)^{m_N} z_0^{n_0} ... z_N^{n_N} \\
\end{multline}
of $F$ on the coherent states.
We have assumed in \eqref{covf} that the coherent states \eqref{Glauberstate} belong to the domain of $F$.  For brevity we will write $f$ instead of $\langle F \rangle$ in the sequel.

The $*_\hbar $-product of covariant symbols $f,g \in C^\infty (\mathbb{C}^{N+1})$ of the operators $F$ and $G$ is defined in the following way: 
\begin{equation}\label{hpf}
(f*_\hbar g ) (\bar z_0 , ... , \bar z_N , z_0, ... , z_N ) := \langle FG \rangle (\bar z_0 , ... , \bar z_N , z_0, ... , z_N ). 
\end{equation}
Using the resolution 
\begin{equation}\label{glauber1}
\int_{\mathbb{C}^{N+1}} \frac{ |w_0, ... , w_N \rangle \langle w_0, ... , w_N | } {\langle w_0, ... , w_N | w_0, ... , w_N \rangle } d\nu_\hbar (\bar w_0,..., \bar w_N, w_0,..., w_N )  = \mathbbm{1},
\end{equation}
 of identity $\mathbbm{1}$, where $d\nu_\hbar (\bar w_0,..., \bar w_N, w_0,..., w_N ) = \frac{1}{(\pi \hbar )^{N+1}} d(\Re w_0 )... d(\Re w_N ) d(\Im w_0)... d(\Im w_N )$ is the Liouville measure of the symplectic manifold $\mathbb{C}^{N+1}$ normalized by a factor, one obtains from \eqref{hpf} the standard formula for $*_\hbar$-product
\begin{multline}
(f*_\hbar g ) (\bar z_0 , ... , \bar z_N , z_0, ... , z_N ) =  \\%\frac {|\langle z_0, ..., z_N | v_0, ... , v_N \rangle |^2}{\langle z_0, ..., z_N |z_0, ... , z_N \rangle \langle v_0, ..., v_N | v_0, ... , v_N \rangle } d\nu_\hbar (\bar v_0, ... \bar v_N , v_0, ... , v_N )  $$
 =\int_{\mathbb{C}^{N+1}} f(\bar z_0, \ldots , \bar z_N , w_0, \ldots , w_N ) g( \bar w_0 , \ldots , \bar w_N, z_0, \ldots , z_N )\times\\
 \times e^{-\frac{1}{\hbar } (|z_0- w_0 |^2 + \ldots + |z_N - w_N |^2 )} d\nu_\hbar (\bar w_0,..., \bar w_N, w_0,..., w_N ) =\\
 = \sum_{j_0, \ldots , j_N =0}^\infty \frac{\hbar^{j_0 + \ldots + j_N }}{j_0! \ldots j_N!} \left(\frac{\partial^{j_0}}{\partial z_0^{j_0}} \ldots \frac{\partial^{j_N}}{\partial z_N^{j_N}}\right) f(\bar z_0, \ldots , \bar z_N , z_0, \ldots , z_N ) \times \\
 \times \left(\frac{\partial^{j_0}}{\partial \bar z_0^{j_0}} \ldots \frac{\partial^{j_N}}{\partial \bar z_N^{j_N}}\right) g(\bar z_0, \ldots , \bar z_N , z_0, \ldots , z_N ). 
%\int_{\mathbb{C}^{N+1}} \frac{\langle z_0, ..., z_N |F| v_0, ... , v_N \rangle }{\langle z_0, ..., z_N | v_0, ... , v_N \rangle }\frac{\langle v_0, ..., v_N |G| z_0, ... , z_N \rangle }{\langle v_0, ..., v_N | z_0, ... , z_N \rangle } \cdot 
\end{multline}

Note here that 
\begin{equation}
f*_\hbar g \underset{\hbar \to 0 }{\longrightarrow } f \cdot g 
\end{equation}
 and 
\begin{equation}
\lim_{\hbar \to 0 } \frac{-i}{\hbar} (f*_\hbar g- g*_\hbar f )  = \{ f, g \}, 
\end{equation}
i.e. in the limit $\hbar \to 0 $ we come back to the Poisson algebra of real smooth functions on $\mathbb{C}^{N+1}$.

Using the above notions, after simple calculations, one obtains the following correspondences 
\begin{equation}\label{covak}
\langle A_k \rangle = I_k,
\end{equation}
\begin{equation}\label{cova}
\langle A \rangle \underset{\hbar \to 0 }{\longrightarrow } z,
\end{equation}
\begin{equation}
\langle \textbf{H} \rangle \underset{\hbar \to 0 }{\longrightarrow } H,
\end{equation}
between the suitable quantum and classical quantities defined in  \eqref{qH}, \eqref{ao}, \eqref{ai} and \eqref{H}, \eqref{int}, \eqref{cfz}, respectively. Note that when $g_0$ is constant then one has the equality $ \langle A \rangle = z$. See \eqref{cfz} for the definition of the variable $z$.

It follows from \eqref{covak} and \eqref{cova} that after taking the covariant symbols  \eqref{covf} for the commutation relations (\ref{comrel}-\ref{comij}) and the relation 
\begin{equation}
\frac{-i}{\hbar} [A, A^*] = \frac{-i}{\hbar} (\mathcal{G}_\hbar (A_0, A_1, ... , A_N) - \mathcal{G}_\hbar (A_0-\hbar, A_1, ... , A_N) )
\end{equation}
in the limit $\hbar \to 0$ one obtains the relations (\ref{pbx0x}-\ref{2.18}) for the corresponding classical quantities which define classical Kummer shape algebra. Summing up, we state that the system of $N+1$ quantum harmonic oscillators, considered in Section \ref{sec:qkummer}, converges in the classical limit $\hbar \to 0 $ to its classical counterpart presented in Section \ref{sec:kummer}.

Now we will apply the classical and quantum reduction procedures discussed in Section \ref{sec:kummer} and Section \ref{sec:qkummer} to the construction of the reduced coherent state map 
\begin{equation}\label{cohstat}
\mathcal{K}_{c_1, \ldots, c_N}: \mathbf{J}^{-1}(c_1, \ldots, c_N)/\mathbb{T}^N \to \mathbb{C}\mathbb{P}(\mathcal{H}_{c_1, \ldots, c_N}).
\end{equation} 
Note that the Glauber coherent state map  $K_G:\Omega^{N+1} \to \mathcal{H}$, defined in \eqref{Glauberstate}, has equivariance property 
\begin{equation}\label{potokna}
|\sigma_r (t)(z_0, \ldots ,  z_N) \rangle = e^{\frac{i}{\hbar} A_r(t)}|z_0, \ldots, z_N \rangle, 
\end{equation}
where $r=1,\ldots, N$ and $t\in \mathbb{R}$, with respect to the classical and quantum actions of $\mathbb{T}^N$. We also recall that $I_0, I_1, \ldots, I_N$ are invariants of the hamiltonian flows \eqref{flow}. Hence, passing in \eqref{potokna} from the complex canonical coordinates $(z_0, \ldots, z_N)$ to the real canonical coordinates $(I_0, I_1, \ldots, I_N, \psi_0, \psi_1, \ldots, \psi_N )$ we find that, for $r=1,\ldots, N$, one has 
\begin{multline}
P_{c_1, \ldots, c_N} |I_0, c_1, \ldots, c_N , \psi_0, \ldots, \psi_r +t, \ldots, \psi_N \rangle = \\
= e^{\frac{i}{\hbar} c_r t} P_{c_1, \ldots, c_N} |I_0, c_1, \ldots, c_N, \psi_0,  \ldots, \psi_r, \ldots, \psi_N \rangle ,
\end{multline}
if $(z_0, \ldots, z_N)^T \in \mathbf{J}^{-1}(c_1, \ldots, c_N)$, where $P_{c_1, \ldots, c_N} : \mathcal{H} \to \mathcal{H}_{c_1, \ldots, c_N}$ is the orthogonal projection of $\mathcal{H}$ on $\mathcal{H}_{c_1, \ldots, c_N}$.

 Let us notice %and the exponent $l_0, \ldots, l_N$ have the same sign. 
 that according to \eqref{enki} the Hilbert subspace $\mathcal{H}_{c_1, \ldots, c_N} $ is spanned by the vectors $\{ |v_0+ nl_0, \ldots, v_N+nl_N \rangle \}_{n= 0}^L $, where the vector $|v_0, \ldots, v_N\rangle $ satisfies $A|v_0, \ldots, v_N\rangle =0$. If all the exponents $l_0, \ldots, l_N$ are non-negative, then $L=\infty $, and otherwise $L= \min_{i\in N_n} \{-\frac{v_i}{l_i}\}$.

Further, in order to simplify our investigations, we will consider only the case when $g_0$ is a constant function.

 Let us define the complex analytic map $K_{c_1, \ldots, c_N} : \mathbb{C} \ni z \mapsto |z; c_1, \ldots, c_N \rangle \in \mathcal{H}_{c_1, \ldots, c_N}$ of the complex plane $\mathbb{C}$ into Hilbert space $\mathcal{H}_{c_1, \ldots, c_N}$ by
 \begin{multline}\label{nowystan}
|z; c_1, \ldots, c_N> := \\
=\sum_{n=0}^L \frac{ z^n }{(\hbar^{\frac{1}{2}(l_0+\ldots +l_N)}g_0)^n \sqrt{(v_0+nl_0)! \ldots (v_N+nl_N)!}} |v_0+nl_0, \ldots, v_N+nl_N\rangle ,
\end{multline}
where $L+1 = \dim \mathcal{H}_{c_1, \ldots, c_N}$. 
From \eqref{Glauberstate} and \eqref{nowystan} we have 
\begin{multline}\label{withfactor}
P_{c_1, \ldots, c_N} |z_0, \ldots, z_N \rangle = \frac{z_0^{v_0} \ldots z_N^{v_N}}{\sqrt{\hbar^{v_0+\ldots +v_N}}}|z ; c_1, \ldots, c_N\rangle =\\
= \frac{e^{i\sum_{j=0}^N \frac{c_j}{\hbar}\psi_j}}{\sqrt{\hbar^{v_0+\ldots + v_N}}} \left(\sum_{j=0}^N \kappa_{0j}I_j\right)^{\frac{v_0}{2}} \ldots \left(\sum_{j=0}^N \kappa_{Nj}I_j\right)^{\frac{v_N}{2}}   |z ; c_1, \ldots, c_N\rangle ,
 \end{multline}
 where 
 \begin{multline}\label{cvz}
 z = g_0 \frac{\prod_{i\in N_p} z_i^{l_i}}{\prod_{j\in N_n} z_j^{|l_j|}} = \frac{1}{\prod_{j\in N_n} \left(\sum_{k=0}^N \kappa_{jk} I_k\right)^{|l_j|}} \sqrt{\mathcal{G}_0 (I_0, \ldots, I_N) }e^{i\psi_0}=\\
= : \sqrt{ \tilde{\mathcal{G}}_0 (I_0, \ldots, I_N) }e^{i\psi_0} ,
 \end{multline}
$\mathcal{G}_0 (I_0, \ldots, I_N)$ is defined by \eqref{funkcjagiezero} for $g_0 = const $ and $c_0, \ldots, c_N$ are related to $v_0, \ldots, v_N$ by \eqref{lambdai}.

%We can identify the complex variable $z$ defined in \eqref{cvz} with the one defined by \eqref{cfz} in the case when 
%\begin{equation}\label{dzizero}
%g_0(|z_0|^2, \ldots, |z_N|^2 ) = g \prod_{j\in N_n } |z_j|^{2l_j} . 
%\end{equation}

%Now, after restriction of \eqref{withfactor} to $\textbf{J}^{-1}(c_1, \ldots, c_N)$,
%we obtain the map $K_{c_1, \ldots, c_N} :\textbf{J}^{-1}(c_1, \ldots, c_N) \to \mathcal{H}_{c_1, \ldots, c_N}$, 
%for which the dependence on $\psi_1, \ldots, \psi_N$ appears only in the factor in front of $|z ; n_0, \ldots , n_N \rangle $. 
%This map taken up to a complex factor defines the reduced coherent state map \eqref{cohstat}. 

From \eqref{cvz} and \eqref{kapa} we obtain the dependences 
\begin{equation}\label{nowezet}
|z|^2 = g_0^2 \frac{\prod_{i\in N_p} \left(l_iI_0 + \sum_{k=1}^N \kappa_{ik} I_k\right)^{l_i} }{\prod_{j\in N_n } \left(l_j I_0 + \sum_{k=1}^N \kappa_{jk} I_k\right)^{|l_j|}} = \tilde{\mathcal{G}}_0 (I_0, \ldots, I_N )
\end{equation}
and 
\begin{equation}\label{4.21}
\mathcal{G}_0(I_0, \ldots , I_N ) = \prod_{j\in N_n} \left(l_j I_0 + \sum_{k=1}^N \kappa_{jk} I_k\right)^{2|l_j|} \tilde{\mathcal{G}}_0 (I_0, \ldots, I_N).
\end{equation}
%of $|z|^2$ on the variables $I_0, \ldots, I_N$, when the function $\mathcal{G}_0$ is taken for $g_0$ defined in  \eqref{dzizero}. 

So, substituting $\mathcal{G}_0$ and $\tilde{\mathcal{G}}_0$ into \eqref{casimir} we obtain the corresponding Kummer shapes $\mathcal{C}^{-1}(0)$ and $\tilde{\mathcal{C}}^{-1}(0)$. We also mention that since of \eqref{4.21} one has the diffeomorphism $\textbf{I}_{c_1, \ldots, c_N} : \mathcal{C}^{-1} (0) \to \tilde{\mathcal{C}}^{-1} (0)$ defined by 
\begin{equation}
\textbf{I}_{c_1, \ldots, c_N} : \mathcal{C}^{-1}(0) \ni \left(\begin{array}{l}
z \\
I_0
\end{array}\right) \mapsto \left(\begin{array}{c}
\left(\prod_{j\in N_n} \left( l_j I_0+ \sum_{k=1}^N \kappa_{jk} c_k\right)^{|l_j|}\right)^{-1} z \\
I_0 
\end{array}\right) \in \tilde{\mathcal{C}}^{-1}(0). 
\end{equation}

\begin{prop}\label{prop3}
If $I_0$ satisfies \eqref{dots}
then 
\begin{equation}\label{posgie0}
\frac{\partial \tilde{\mathcal{G}}_0 }{\partial I_0} (I_0, \ldots, I_N) > 0 .
\end{equation}
Additionally one has 
\begin{align}
\label{g0wa}
\tilde{\mathcal{G}}_0 (a, I_1, \ldots, I_N ) = 0, \\
\label{g0wb}
\tilde{\mathcal{G}}_0 (b, I_1, \ldots, I_N ) = +\infty , 
\end{align}
for $a= \max\limits_{i\in N_p} \left\{-\frac{1}{l_i} \sum_{j=1}^N \kappa_{ij} I_j\right\}\mbox{ }$ and $\mbox{ } b= \min\limits_{i\in N_n} \left\{-\frac{1}{l_i} \sum_{j=1}^N \kappa_{ij} I_j\right\}$.
\end{prop}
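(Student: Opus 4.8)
The plan is to rewrite $\tilde{\mathcal{G}}_0$ as a single monomial in the quantities
$u_k := l_k I_0 + \sum_{j=1}^N \kappa_{kj} I_j$, for $k=0,1,\ldots,N$, and then read off all three assertions from the elementary behaviour of such a product. The key preliminary observation is that the resonance condition \eqref{omega} together with $\kappa = \rho^{-1}$ forces $\kappa_{k0}=l_k$: multiplying $\sum_j \rho_{ij}l_j=\delta_{0i}$ by $\kappa$ on the left gives $l_k=\sum_i \kappa_{ki}\delta_{0i}=\kappa_{k0}$. Consequently $u_k=\sum_{j=0}^N \kappa_{kj}I_j=|z_k|^2$ by \eqref{kapa}, so that the cone condition \eqref{dots} is exactly $u_k>0$ for all $k$, and $\frac{\partial u_k}{\partial I_0}=l_k$. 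With this, definition \eqref{nowezet} collapses to the compact product
\begin{equation}
\tilde{\mathcal{G}}_0 = g_0^2 \prod_{k=0}^N u_k^{l_k},
\end{equation}
where the factors with $l_k=0$ are trivial, and $g_0$ is a positive constant.

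For the inequality \eqref{posgie0} I would argue by logarithmic differentiation. Since every $u_k>0$ on the cone and $g_0>0$, the product $\tilde{\mathcal{G}}_0$ is strictly positive, so
\begin{equation}
\frac{1}{\tilde{\mathcal{G}}_0}\frac{\partial \tilde{\mathcal{G}}_0}{\partial I_0}=\sum_{k=0}^N l_k\,\frac{1}{u_k}\frac{\partial u_k}{\partial I_0}=\sum_{k=0}^N \frac{l_k^2}{u_k}.
\end{equation}
Each summand is non-negative, and at least one $l_k\neq 0$ because $N_p\neq\emptyset$ (the all-negative case having been excluded in the setup), so the right-hand side is strictly positive; multiplying back by $\tilde{\mathcal{G}}_0>0$ yields \eqref{posgie0}.

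For the boundary values \eqref{g0wa}--\eqref{g0wb} I would track which factor degenerates at each endpoint. For $i\in N_p$ one has $u_i=l_i\big(I_0+\tfrac{1}{l_i}\sum_j\kappa_{ij}I_j\big)$ with $l_i>0$, so $u_i$ vanishes precisely at $I_0=-\tfrac{1}{l_i}\sum_j\kappa_{ij}I_j$; hence at $I_0=a$ the maximizing index has $u_i(a)=0$, so at least one factor of the numerator $\prod_{i\in N_p}u_i^{l_i}$ tends to $0$, while (using $a<b$, the nonemptiness of $]a,b[$) the denominator $\prod_{j\in N_n}u_j^{|l_j|}$ stays positive and finite; this gives $\tilde{\mathcal{G}}_0(a,\ldots)=0$. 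Symmetrically, for $j\in N_n$ the factor $u_j^{l_j}=u_j^{-|l_j|}$ blows up as $u_j\to 0^+$, and at $I_0=b$ the minimizing index has $u_j(b)=0$ while the numerator is positive and finite, forcing $\tilde{\mathcal{G}}_0(b,\ldots)=+\infty$; when $N_n=\emptyset$ one reads $b=+\infty$ and the polynomial $\tilde{\mathcal{G}}_0$ tends to $+\infty$ as each $u_i\to+\infty$. I expect the only genuinely delicate point to be this endpoint bookkeeping: one must confirm that, even allowing ties among the maximizing (resp. minimizing) indices, exactly the numerator (resp. denominator) degenerates while the complementary product remains bounded away from $0$ and $\infty$, which is exactly what the strict cone inequalities and $a<b$ guarantee.
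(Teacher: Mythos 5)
Your proposal is correct and follows essentially the same route as the paper: the paper also rewrites $\tilde{\mathcal{G}}_0$ as the monomial $g_0^2\prod_{i\in N_p} l_i^{l_i}(I_0-\delta_i)^{l_i}\big/\prod_{j\in N_n} l_j^{|l_j|}(I_0-\delta_j)^{|l_j|}$ with $\delta_i=-\frac{1}{l_i}\sum_{j=1}^N\kappa_{ij}I_j$ (your $u_k=l_k(I_0-\delta_k)$), obtains \eqref{posgie0} by the same logarithmic differentiation (your $\sum_k l_k^2/u_k$ equals the paper's $\sum_{i\in N_p}\frac{l_i}{I_0-\delta_i}+\sum_{j\in N_n}\frac{|l_j|}{\delta_j-I_0}$), and gets \eqref{g0wa}--\eqref{g0wb} by substituting $I_0=a$ and $I_0=b$. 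Your extra bookkeeping (the verification $\kappa_{k0}=l_k$, ties among extremizing indices, and the $N_n=\emptyset$ case) only makes explicit what the paper leaves implicit.
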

\begin{proof}
Denoting by $\delta_i = -\frac{1}{l_i} \sum_{j=1}^N \kappa_{ij}I_j$ one has 
\begin{equation}\label{g0zdelta}
\tilde{\mathcal{G}}_0 (I_0, \ldots , I_N ) = g_0^2\frac{\prod_{i\in N_p } l_i^{l_i}(I_0- \delta_i)^{l_i}}{\prod_{j\in N_n} l_i^{|l_i|} (I_0 - \delta_j )^{|l_j|}}. 
\end{equation}
Then one can easily see that the derivative 
\begin{equation}
\frac{\partial \tilde{\mathcal{G}}_0}{\partial I_0} (I_0, \ldots, I_N ) = g_0^2 \frac{\prod_{i\in N_p } l_i^{l_i}(I_0- \delta_i)^{l_i}}{\prod_{j\in N_n} l_i^{|l_i|} (I_0 - \delta_j )^{|l_j|}} \left(\sum_{i\in N_p } \frac{l_i }{I_0 - \delta_i } + \sum_{j\in N_n} \frac{|l_i|}{\delta_i - I_0}\right) 
\end{equation}
is positive since $a= \max_{i\in N_p} \{\delta_i\} < I_0 < \min_{i\in N_n} \{\delta_i \} =b $.

Substituting $I_0=a$, $I_0=b$ into \eqref{g0zdelta} one gets \eqref{g0wa} and \eqref{g0wb}, respectively. 
\end{proof}

From this proposition we conclude that $\tilde{\mathcal{G}}_0$ is invertible as a function of $I_0$ and thus, $I_0 = \tilde{\mathcal{G}}^{-1}_0 (|z|^2, c_1, \ldots, c_N)$. Because of this the projection $pr_1: \tilde{\mathcal{C}}^{-1}(0) \subset \mathbb{C}\times ]a, b[ \to \mathbb{C}$ on the complex plane $\mathbb{C}$ defines the diffeomorphism $\Pi: \tilde{\mathcal{C}}^{-1}(0) \to \mathbb{C} $. The interval $]a,b[$ is defined like in Proposition \ref{prop3}, where $c_1, \ldots, c_N$ are substituted instead of $I_1, \ldots, I_N$. 

The intersections of the Kummer shape $\mathcal{C}^{-1}(0)$ and the "new Kummer shape" $\tilde{\mathcal{C}}^{-1}(0)$ with the plane spanned by the axises $I_0$ and $x$ is illustrated in Figure 1. \\
\begin{figure}[h]
\includegraphics[width=12cm]{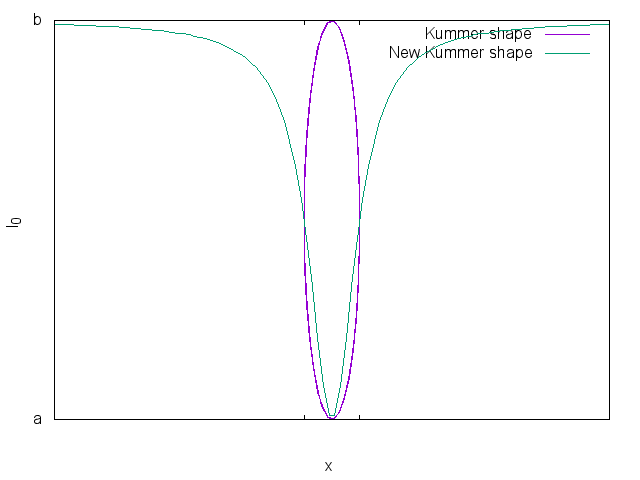}
\end{figure}
\textit{Figure 1.}

Summing up the above facts we have 
%\begin{equation}
%\textbf{J}^{-1}(c_1, \ldots, c_N)/\mathbb{T}^N  \overset{\Phi_{c_1, \ldots, c_N} }{\longrightarrow } \mathcal{C}^{-1}(0) \overset{\textbf{I}_{c_1, \ldots, c_N} }{\longrightarrow } \tilde{\mathcal{C}}^{-1}(0) \overset{\Pi}{\longrightarrow } \mathbb{C} \overset{K_{c_1, \ldots, c_N} }{\longrightarrow } \mathcal{H}_{c_1, \ldots, c_N}. 
%\end{equation}

\begin{prop}
The map 
\begin{equation}
K_{c_1, \ldots, c_N} \circ \Pi \circ \textbf{I}_{c_1, \ldots, c_N} \circ \Phi_{c_1, \ldots, c_N} : \textbf{J}^{-1}(c_1, \ldots, c_N )/ \mathbb{T}^N \to \mathcal{H}_{c_1, \ldots, c_N}
\end{equation}
taken up  to a complex factor defines the reduced coherent state map \eqref{cohstat}.
\end{prop}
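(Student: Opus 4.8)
The plan is to show that the displayed composition, once we pass to complex lines in $\mathcal{H}_{c_1,\ldots,c_N}$, reproduces the map obtained by reducing the Glauber coherent state map. First I would recall how the reduced coherent state map \eqref{cohstat} is built: one restricts the Glauber map $K_G$ to $\textbf{J}^{-1}(c_1,\ldots,c_N)$, applies the orthogonal projection $P_{c_1,\ldots,c_N}:\mathcal{H}\to\mathcal{H}_{c_1,\ldots,c_N}$, and then descends the resulting assignment to the quotient $\textbf{J}^{-1}(c_1,\ldots,c_N)/\mathbb{T}^N$ and to the projective space $\mathbb{C}\mathbb{P}(\mathcal{H}_{c_1,\ldots,c_N})$. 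This descent is legitimate: the image lies in $\mathcal{H}_{c_1,\ldots,c_N}$ by construction of $P_{c_1,\ldots,c_N}$, and along each $\mathbb{T}^N$-orbit the equivariance \eqref{potokna} shows that $P_{c_1,\ldots,c_N}|z_0,\ldots,z_N\rangle$ is only multiplied by the phase $e^{\frac{i}{\hbar}c_rt}$ under $\sigma_r(t)$, so the complex line it spans is constant on orbits and the map into $\mathbb{C}\mathbb{P}(\mathcal{H}_{c_1,\ldots,c_N})$ is well defined.

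Next I would evaluate the proposed composition on a representative point $(I_0,\psi_0)$ of $\textbf{J}^{-1}(c_1,\ldots,c_N)/\mathbb{T}^N$, reading off each arrow from its definition. The map $\Phi_{c_1,\ldots,c_N}$ sends $(I_0,\psi_0)$ to the point $(z,I_0)\in\mathcal{C}^{-1}(0)$ with $z=\sqrt{\mathcal{G}_0(I_0,c_1,\ldots,c_N)}\,e^{i\psi_0}$; the rescaling $\textbf{I}_{c_1,\ldots,c_N}$ carries it to $(\tilde z,I_0)\in\tilde{\mathcal{C}}^{-1}(0)$, and by \eqref{4.21} the new radial coordinate is exactly $\tilde z=\sqrt{\tilde{\mathcal{G}}_0(I_0,c_1,\ldots,c_N)}\,e^{i\psi_0}$; the diffeomorphism $\Pi$ returns this complex number $\tilde z$; and $K_{c_1,\ldots,c_N}$ produces the vector $|\tilde z;c_1,\ldots,c_N\rangle$ given by the series \eqref{nowystan}. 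Thus the whole composition is the assignment $(I_0,\psi_0)\mapsto|\tilde z;c_1,\ldots,c_N\rangle$ with $\tilde z=\sqrt{\tilde{\mathcal{G}}_0}\,e^{i\psi_0}$.

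Finally I would match this with the reduced Glauber map using \eqref{withfactor}. Choosing any $(z_0,\ldots,z_N)^T\in\textbf{J}^{-1}(c_1,\ldots,c_N)$ lying over $(I_0,\psi_0)$, formula \eqref{cvz} identifies the complex number $z$ attached to it with precisely $\sqrt{\tilde{\mathcal{G}}_0(I_0,\ldots,I_N)}\,e^{i\psi_0}$, i.e. with the $\tilde z$ produced above; this is the crucial bookkeeping step. Then \eqref{withfactor} states that $P_{c_1,\ldots,c_N}|z_0,\ldots,z_N\rangle$ equals the explicit nonzero complex scalar displayed there times $|\tilde z;c_1,\ldots,c_N\rangle$. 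Consequently the two vectors span the same line, so modulo a complex factor the composition $K_{c_1,\ldots,c_N}\circ\Pi\circ\textbf{I}_{c_1,\ldots,c_N}\circ\Phi_{c_1,\ldots,c_N}$ coincides with the reduced coherent state map \eqref{cohstat}.

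The main obstacle is not any single estimate but the careful matching of variables and the verification that the proportionality factor never vanishes. Concretely one must check that the $z$ appearing geometrically after $\Pi\circ\textbf{I}_{c_1,\ldots,c_N}\circ\Phi_{c_1,\ldots,c_N}$ is literally the $z$ of \eqref{cvz} feeding \eqref{withfactor} (this is where the assumption $g_0=\mathrm{const}$ and the relation \eqref{4.21} enter), and that the scalar factor in \eqref{withfactor} is everywhere nonzero, which follows from $g_0\neq 0$ together with the positivity $\sum_{j}\kappa_{kj}I_j>0$ guaranteed by \eqref{dots} and Proposition \ref{prop3}. One should also note that the answer is independent of the chosen representative over $(I_0,\psi_0)$, which reduces to the equivariance already invoked in the first step together with the fact that $\Phi_{c_1,\ldots,c_N}$ is constant on $\mathbb{T}^N$-orbits.
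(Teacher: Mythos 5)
Your proposal is correct and follows essentially the same route as the paper, which proves this proposition simply by ``summing up'' the preceding facts: the chain $\Phi_{c_1,\ldots,c_N}\to\textbf{I}_{c_1,\ldots,c_N}\to\Pi$ producing $\tilde z=\sqrt{\tilde{\mathcal{G}}_0}\,e^{i\psi_0}$, the identification of this $\tilde z$ with the $z$ of \eqref{cvz}, and the factorization \eqref{withfactor} of $P_{c_1,\ldots,c_N}|z_0,\ldots,z_N\rangle$ through $|z;c_1,\ldots,c_N\rangle$ up to a nonvanishing scalar, with $\mathbb{T}^N$-equivariance \eqref{potokna} ensuring well-definedness on the quotient. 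Your explicit verification of the nonvanishing factor and representative-independence merely spells out what the paper leaves implicit.
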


%Let us make here some comment. Namely, the quantum reduction of the Glauber coherent states, see \eqref{withfactor} and \eqref{cvz}, modifies the structural function $\mathcal{G}_0$ in such a way as in \eqref{dzizero}. Note that when $N_n = \emptyset $, i.e. all $l_i$ are positive, $\mathcal{G}_0$ does not change. However, if $N_n \neq \emptyset $, then the quantum reduction leads to the other Kummer shape different from the initial one. This phenomena will be illustrated in Example \ref{krawtchouk}, see Section 6, where the case $(l_0, l_1 ) = (1, -1)$ is considered. 

%inverting $ \mathcal{G}_0 (I_0, \ldots, I_N)$ as a function of $I_0$. Then, substituting \eqref{**} into \eqref{withfactor}, from the resolution of identity \eqref{glauber1} we find the resolution of $\mathbbm{1}_{c_1, \ldots , c_N} := P_{c_1, \ldots, c_N}|_{\mathcal{H}_{c_1, \ldots, c_N}}$ :

Let us now back to the general case in which the structural function $\mathcal{G}_\hbar$ is given by \eqref{ghf}, i.e. $g_0$ is any positive function.

One easily verifies that if $\dim_{\mathbb{C}} \mathcal{H}_{c_1, \ldots, c_N} = \infty$, then the coherent states \eqref{nowystan} can be generalized in the following way 
\begin{multline}\label{states}
|z; c_1, \ldots, c_N\rangle = \frac{1}{\sqrt{v_0!\ldots v_N!}}\bigg(|v_0, \ldots , v_N \rangle  + \\
+\sum_{n=1}^\infty \frac{z^n}{\sqrt{\mathcal{G}_\hbar (0) \ldots \mathcal{G}_\hbar (n-1)}} |v_0+nl_0, \ldots, v_N+nl_N \rangle \bigg) .
\end{multline}
Here, in order to simplify the notation, we put $\mathcal{G}_\hbar (n)$ instead of 
\begin{equation}
 \mathcal{G}_\hbar (c_0+\hbar n,  c_1 , \ldots ,  c_N ) = g_0^2(\hbar (v_0+nl_0), \ldots, \hbar (v_N +n l_N)) \mathcal{P}_{l_0}(\hbar (v_0+nl_0)) \ldots \mathcal{P}_{l_N} (\hbar (v_N + nl_N)),
\end{equation}
 where 
\begin{equation}
\mathcal{P}_{l_i}(\hbar (v_i + nl_i))=\left\{\begin{array}{ll}

\hbar^{l_i} (v_i+nl_i+1)_{l_i}& \textrm{if }l_i > 0  \\ 

1 & \textrm{if }l_i=0 \\

 \hbar^{|l_i|}(-1)^{l_i}(-v_i -nl_i)_{|l_i|} & \textrm{if }l_i<0

                      \end{array}\right.
\end{equation}
Equivalently the coherent states \eqref{states} could be defined as eigenvectors of the annihilation operator $\textbf{A}$, i.e.
\begin{equation}\label{state}
\textbf{A}|z; c_1, \ldots, c_N\rangle = z|z; c_1, \ldots, c_N\rangle . 
\end{equation}
%The property \eqref{state} is valid if we define $|z; n_0, \ldots, n_N \rangle $ by \eqref{states} assuming $g$ to be any positive function. Recall that the structural function $\mathcal{G}_\hbar $ defined in \eqref{ghf}  depends on the choice of $g$ and $(l_0, \ldots, l_N)^T \in \mathbb{Z}^{N+1}$. 

From \eqref{states} we find that $\langle z; c_1, \ldots, c_N | z; c_1, \ldots, c_N \rangle  < \infty $ if $z\in \mathbb{D}_{R_\hbar}:=\{z\in \mathbb{C} :|z| < R_\hbar :=\limsup_{n\to \infty} \mathcal{G}_\hbar (n)\}$. In the case when function $g_0$ is constant one has $R_\hbar= \infty $. Notice also that the coherent states \eqref{states} %$| z ; n_1, \ldots, n_N\rangle $, $z\in \mathbb{D}_{R_\hbar}$ 
form an over-complete set of vectors in $\mathcal{H}_{c_1, ..., c_N}$ when $z$ runs over the disc $\mathbb{D}_{R_\hbar}$.

In the subsequent considerations we will postulate the existence of the resolution 
\begin{equation}\label{residentity}
\mathbbm{1}_{c_1, \ldots, c_N} = \int_{\mathbb{C}} |z; c_1, \ldots, c_N \rangle \langle z; c_1, \ldots, c_N| d\mu_{c_1, \ldots, c_N} (\bar z , z)
\end{equation}
for $\mathbbm{1}_{c_1, \ldots, c_N} = P_{c_1, \ldots , c_N}|_{\mathcal{H}_{c_1, \ldots, c_N}}$ with respect to some measure $d\mu_{c_1, \ldots, c_N} (\bar z , z).$

Similarly to \eqref{covf} we define the covariant symbol 
\begin{equation}\label{covsymb}
\langle \textbf{F} \rangle (\bar z, z) := \frac{\langle z; c_1, \ldots, c_N|\textbf{F}|z; c_1, \ldots , c_N \rangle }{\langle z; c_1, \ldots, c_N|z; c_1, \ldots, c_N\rangle }, 
\end{equation}
 for an operator $\textbf{F} : \mathcal{D}(\textbf{F}) \to \mathcal{H}_{c_1, \ldots , c_N }$, if its domain $\mathcal{D}(\textbf{F})\subset \mathcal{H}_{c_1, \ldots , c_N}$ contains all coherent states defined in \eqref{states}.

Since one has one-to-one correspondence between the operators $\textbf{F},\textbf{G}$ and their symbols, we can define the $*_\hbar $-product 
\begin{equation}\label{cproduct}
(\langle \textbf{F}\rangle *_\hbar \langle \textbf{G}\rangle )(\bar z , z):= \frac{\langle z; c_1, \ldots, c_N |\textbf{FG}|z; c_1, \ldots, c_N \rangle }{\langle z; c_1, \ldots, c_N|z; c_1, \ldots, c_N\rangle },
\end{equation}
of symbols. Everywhere below we will assume that operators have a common invariant domain $\mathcal{D}$, which contains the set of all coherent states. The dependence of star product on $\hbar$ is through the coherent states.%Since the coherent states \eqref{states}  depend on $\hbar $, the same is also valid for the star product defined in \eqref{cproduct}. 

From \eqref{cproduct} and \eqref{residentity} we obtain 
\begin{multline}\label{product2}
(\langle \textbf{F} \rangle *_\hbar \langle \textbf{G}\rangle )(\bar z , z)= \int_{\mathbb{D}_{R_\hbar} }\frac{\langle z;c_1, \ldots, c_N\mid \textbf{F}\mid w; c_1, \ldots, c_N\rangle}{\langle z; c_1, \ldots, c_N\mid w; c_1, \ldots, c_N \rangle } \times \\
\times \frac{\langle w; c_1, \ldots, c_N\mid \textbf{G}\mid z; c_1, \ldots , c_N \rangle }{\langle w; c_1, \ldots, c_N \mid z; c_1, \ldots, c_N\rangle } |a_{c_1, \ldots, c_N} (z,w)|^2 d\nu_{c_1, \ldots, c_N} (\bar w, w) ,
\end{multline}
where $ a_{c_1, \ldots, c_N} (z,w) $ is transition amplitude between the coherent states $\mathbb{C}|z; c_1, \ldots, c_N\rangle $ and $\mathbb{C}|w; c_1, \ldots , c_N\rangle$ defined by
\begin{equation}\label{amplitude}
a_{c_1, \ldots, c_N} (z,w)  := \frac{ \langle z; c_1, \ldots, c_N|w;c_1, \ldots, c_N\rangle }{\langle z; c_1, \ldots, c_N|z ; c_1, \ldots , c_N\rangle^{\frac{1}{2}} \langle w; c_1, \ldots, c_N|w; c_1, \ldots, c_N\rangle^{\frac{1}{2}} } 
\end{equation} 
and 
\begin{equation}
d\nu_{c_1, \ldots, c_N} (\bar w, w)= \langle w; c_1, \ldots, c_N | w; c_1, \ldots, c_N \rangle d\mu_{c_1, \ldots, c_N} (\bar w , w).
\end{equation} 
%\begin{equation}
%\mid a_\hbar (z,v) \mid^2 := \frac{| \langle z|v\rangle |^2}{\langle z|z\rangle \langle v|v\rangle }.
%\end{equation}
%Note here that $| a_{n_0, \ldots, n_N} (z,v)|^2 d\mu_\hbar (\bar v, v) $ is the probability of transition between the coherent states. 

From \eqref{state} we see that the covariant symbols of the operators $\textbf{F}$ and $\textbf{G}$ defined by 
\begin{equation}
 \textbf{F}:=  \sum_{k,l=0}^\infty f_{\bar k , l } \textbf{A}^{*k} \textbf{A}^l  \mbox{ and } \textbf{G}:=  \sum_{r,s=0}^\infty g_{\bar r , s } \textbf{A}^{*r} \textbf{A}^s   .
 \end{equation}
 are the following 
 \begin{equation}\label{covariantsymbols}
 f(\bar z , z ) = \sum_{k,l=0 }^\infty  f_{\bar k , l } \bar z^k z^l  \quad \mbox{ and } \quad g(\bar z , z ) = \sum_{k,l =0}^\infty g_{\bar k , l } \bar z^k z^l . 
 \end{equation}
 The $*_\hbar$-product $f*_\hbar g$ of these covariant symbols is presented below in \eqref{*product}. See also \cite{AO1}. 
 \begin{prop}
 The $*_\hbar $-product \eqref{cproduct} of the covariant symbols \eqref{covariantsymbols} is given by 
 \begin{multline}\label{*product}
(f *_\hbar g) (\bar z, z )  = \\
=\frac{1}{\langle z; c_1, \ldots, c_N | z ; c_1, \ldots, c_N\rangle }  f(\bar z , \bar \partial_{\mathcal{G}_\hbar} )\left( g (\bar z , z )  \langle z; c_1, \ldots, c_N | z; c_1, \ldots , c_N \rangle \right),
\end{multline}
where, by definition, the operator $\partial_{\mathcal{G}_\hbar}$ acts on the monomial $z^n$ in the following way
\begin{equation}
\partial_{\mathcal{G}_\hbar} z^n := \mathcal{G}_\hbar (n-1) z^{n-1} , 
\end{equation}
if $n\geq 1 $ and $\partial_{\mathcal{G}_\hbar} z^n =0$ if $n=0$. The operator $f(\bar z , \bar \partial_{\mathcal{G}_\hbar })$ is defined by 
\begin{equation}
f(\bar z , \bar \partial_{\mathcal{G}_\hbar }) := \sum_{k,l=0 }^\infty  f_{\bar k , l } \bar z^k \bar \partial^l_{\mathcal{G}_\hbar} 
\end{equation}
and acts on the complex coordinate $\bar z $ only.  
\end{prop}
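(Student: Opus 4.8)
The plan is to strip the identity down to a single differentiation lemma and then assemble the result by bilinearity. Since $f$ and $g$ in \eqref{covariantsymbols} are power series and the $*_\hbar$-product \eqref{cproduct} is bilinear, it suffices to establish \eqref{*product} for the normally ordered monomials $\textbf{F}=\textbf{A}^{*k}\textbf{A}^l$ and $\textbf{G}=\textbf{A}^{*r}\textbf{A}^s$ and then to recombine the series using the common invariant domain $\mathcal{D}$ and convergence on the disc $\mathbb{D}_{R_\hbar}$. Throughout I abbreviate $|n\rangle := |v_0+nl_0,\ldots,v_N+nl_N\rangle$, keep the shorthand $\mathcal{G}_\hbar(n)$ of the text, and set $D_n := \mathcal{G}_\hbar(0)\cdots\mathcal{G}_\hbar(n-1)$ with $D_0=1$, so that by \eqref{states} one has $|z;c_1,\ldots,c_N\rangle=\sum_n c_n z^n|n\rangle$ with $c_n=(v_0!\cdots v_N!\,D_n)^{-1/2}$ real. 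The two operator identities I shall need are $\textbf{A}^s|z;c_1,\ldots,c_N\rangle=z^s|z;c_1,\ldots,c_N\rangle$, which is \eqref{state}, and its adjoint $\langle z;c_1,\ldots,c_N|\textbf{A}^*=\bar z\,\langle z;c_1,\ldots,c_N|$.

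First I would record that for any normally ordered $\textbf{G}$ the unnormalized mean value is the symbol times the norm: moving the creation powers onto the bra and the annihilation powers onto the ket gives
\begin{equation*}
\langle z;c_1,\ldots,c_N|\textbf{G}|z;c_1,\ldots,c_N\rangle = g(\bar z,z)\,\langle z;c_1,\ldots,c_N|z;c_1,\ldots,c_N\rangle .
\end{equation*}
Crucially, all the $\bar z$-dependence here sits in the bra, since $\textbf{G}|z;c_1,\ldots,c_N\rangle$ is holomorphic in $z$ alone; this is exactly why the differential operator in \eqref{*product} will act on $\bar z$ only.

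The heart of the argument is the differentiation lemma: for every $|\psi\rangle=\sum_n\psi_n|n\rangle$ in $\mathcal{D}$ and every $l\geq 0$,
\begin{equation*}
\langle z;c_1,\ldots,c_N|\textbf{A}^l|\psi\rangle = \bar\partial^l_{\mathcal{G}_\hbar}\,\langle z;c_1,\ldots,c_N|\psi\rangle ,
\end{equation*}
where $\bar\partial_{\mathcal{G}_\hbar}$ acts on the bra variable $\bar z$. I would prove it by expanding both sides in the orthonormal basis $\{|n\rangle\}$: on the left $\textbf{A}^l|n\rangle=\sqrt{D_n/D_{n-l}}\,|n-l\rangle$, while on the right $\bar\partial^l_{\mathcal{G}_\hbar}\bar z^n=(D_n/D_{n-l})\,\bar z^{n-l}$. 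The telescoping of the products of $\mathcal{G}_\hbar$-values through the ratios $D_n/D_{n-l}$ makes the square-root normalization factors $c_n$ combine so that the two series coincide term by term, the truncation $\bar\partial_{\mathcal{G}_\hbar}z^0=0$ matching the vanishing of $\textbf{A}^l|n\rangle$ for $n<l$.

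With these two facts the proposition follows at once. Writing $\textbf{F}=\sum_{k,l}f_{\bar k,l}\textbf{A}^{*k}\textbf{A}^l$, peeling each $\textbf{A}^{*k}$ onto the bra to produce $\bar z^k$, and applying the lemma with $|\psi\rangle=\textbf{G}|z;c_1,\ldots,c_N\rangle$ together with the first fact yields
\begin{equation*}
\langle z;c_1,\ldots,c_N|\textbf{F}\textbf{G}|z;c_1,\ldots,c_N\rangle = \sum_{k,l}f_{\bar k,l}\,\bar z^k\,\bar\partial^l_{\mathcal{G}_\hbar}\big(g(\bar z,z)\,\langle z;c_1,\ldots,c_N|z;c_1,\ldots,c_N\rangle\big) ,
\end{equation*}
and dividing by the norm and recognizing the sum as $f(\bar z,\bar\partial_{\mathcal{G}_\hbar})$ gives \eqref{*product}. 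I expect the main obstacle to be analytic rather than algebraic: justifying the term-by-term action of the formally infinite-order operator $f(\bar z,\bar\partial_{\mathcal{G}_\hbar})$ and the interchange of the several infinite summations. I would control this with the radius $R_\hbar$ of \eqref{states}, the overcompleteness of the coherent states on $\mathbb{D}_{R_\hbar}$, and the standing hypothesis that $\textbf{F}$ and $\textbf{G}$ share a common invariant domain containing all the coherent states.
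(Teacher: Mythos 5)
Your proof is correct and follows essentially the same route as the paper: your differentiation lemma $\langle z;c_1,\ldots,c_N|\textbf{A}^l|\psi\rangle = \bar\partial^l_{\mathcal{G}_\hbar}\langle z;c_1,\ldots,c_N|\psi\rangle$ is precisely the adjoint form of the paper's key identity \eqref{expdr}, $\partial_{\mathcal{G}_\hbar}\langle w;c_1,\ldots,c_N|z;c_1,\ldots,c_N\rangle = \bar w\,\langle w;c_1,\ldots,c_N|z;c_1,\ldots,c_N\rangle$, which the paper likewise derives from \eqref{states} and \eqref{state} before peeling $\textbf{A}^{*k}$ onto the bra, $\textbf{A}^s$ onto the ket, and reorganizing the sums exactly as you do. Your explicit basis verification via the ratios $D_n/D_{n-l}$ and your closing remarks on domains and convergence are a somewhat more detailed rendering of the same argument, not a different one.
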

 \begin{proof}
In order to prove \eqref{*product} we note that from \eqref{states} and \eqref{state} it follows that 
\begin{equation}\label{expdr}
\partial_{\mathcal{G}_\hbar} \langle w; c_1. \ldots, c_N | z; c_1, \ldots, c_N \rangle = \bar w  \langle w; c_1, \ldots, c_N | z; c_1, \ldots, c_N \rangle .   
\end{equation}
Using \eqref{state} and \eqref{expdr} we have 
\begin{multline}
\langle z; c_1, \ldots, c_N|z; c_1, \ldots, c_N\rangle (f *_\hbar g) (\bar z, z )= \langle z; c_1, \ldots, c_N |\textbf{FG}|z; c_1, \ldots, c_N \rangle  = \\
=  \sum_{k,l,r,s=0}^\infty f_{\bar k , l }  g_{\bar r , s } \langle z; c_1, \ldots, c_N | \textbf{A}^{*k} \textbf{A}^l\textbf{A}^{*r} \textbf{A}^s | z; c_1, \ldots, c_N \rangle =\\
= \sum_{k,l,r,s=0}^\infty f_{\bar k , l }   g_{\bar r , s } \bar z^k  z^s \bar \partial^l_{\mathcal{G}_\hbar } \partial^r_{\mathcal{G}_\hbar} \langle z; c_1, \ldots, c_N |  z; c_1, \ldots, c_N \rangle =\\
= \sum_{k,l,r,s=0}^\infty f_{\bar k , l } \bar z^k \bar \partial^l_{\mathcal{G}_\hbar }  g_{\bar r , s }   z^s  \bar z^r \langle z; c_1, \ldots, c_N |  z; c_1, \ldots, c_N \rangle =\\
=   f(\bar z , \bar \partial_{\mathcal{G}_\hbar} ) \left(g (\bar z , z )  \langle z; c_1, \ldots, c_N | z; c_1, \ldots , c_N \rangle \right) . 
\end{multline}

\end{proof}

Now using the $*_\hbar $-product \eqref{cproduct} we will show that the quantum Kummer shape algebra corresponds to the classical case when $\hbar \to 0$. For this purpose let us formulate the following 
\begin{prop}\label{prop2}
Let us assume that the quantum structural function $\mathcal{G}_\hbar (\cdot, c_1, \ldots, c_N)$ is invertible. Then in the case when exponents $l_0, \ldots, l_N$ are non-negative one has: \\
(i)
\begin{equation}\label{firsteq} 
 \lim_{\hbar \to 0} \langle \varphi (\textbf{A}^*\textbf{A} ) \rangle = \lim_{\hbar \to 0} \langle \varphi (\textbf{A}\textbf{A}^* ) \rangle  = \varphi( \bar z z) , 
\end{equation}
where $\varphi$ is such analytic function for which the operators $\varphi (\textbf{A}^* \textbf{A} )$ and $\varphi (\textbf{A}\textbf{A}^* )$ are correctly defined. \\
(ii)
\begin{equation}\label{secondeq}
 \langle \textbf{A}_0 \rangle  = \lim_{\hbar \to 0 } \langle \left( \mathcal{G}_\hbar^{-1} (\textbf{A} \textbf{A}^*, c_1, \ldots, c_N)\right)^n \rangle = \left(\mathcal{G}_0^{-1}( \bar z z, c_1, \ldots, c_N ) \right)^n = I_0^n, 
\end{equation}
where $\mathcal{G}_{\hbar}^{-1} ( \cdot , c_1, \ldots, c_N) $ and $\mathcal{G}_0^{-1} ( \cdot , c_1, \ldots, c_N)$ are functions inverse to the structural functions $\mathcal{G}_\hbar ( \cdot , c_1, \ldots, c_N ) $ and $\mathcal{G}_0 ( \cdot , c_1, \ldots, c_N ) $, respectively. 
\end{prop}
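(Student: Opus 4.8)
The plan is to compute the relevant covariant symbols explicitly from the coherent-state expansion \eqref{states} and then extract the $\hbar \to 0$ asymptotics. First I would set up the normalization. Writing $\mathcal{N}(\bar z,z) := \langle z; c_1,\ldots,c_N \mid z;c_1,\ldots,c_N\rangle$, the expansion \eqref{states} together with the orthonormality of the Fock vectors gives
\begin{equation}
\mathcal{N}(\bar z, z) = \frac{1}{v_0!\cdots v_N!}\sum_{n=0}^\infty \frac{|z|^{2n}}{\mathcal{G}_\hbar(0)\cdots \mathcal{G}_\hbar(n-1)},
\end{equation}
where I use the convention that the empty product (for $n=0$) is $1$. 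Because the exponents $l_0,\ldots,l_N$ are all non-negative we are in case \textbf{(ii)}, so $\dim_{\mathbb{C}}\mathcal{H}_{c_1,\ldots,c_N}=\infty$ and the sum runs to infinity. The eigenvector relation \eqref{state}, $\textbf{A}|z;c_1,\ldots,c_N\rangle = z|z;c_1,\ldots,c_N\rangle$, immediately gives $\langle \varphi(\textbf{A}^*\textbf{A})\rangle$ and $\langle \varphi(\textbf{A}\textbf{A}^*)\rangle$ as weighted averages of $\varphi$ evaluated along the sequence $\mathcal{G}_\hbar(n)$; concretely, from \eqref{rel3} and \eqref{rel2} one has $\textbf{A}^*\textbf{A}=\mathcal{G}_\hbar(\textbf{A}_0-\hbar,\ldots)$ and $\textbf{A}\textbf{A}^*=\mathcal{G}_\hbar(\textbf{A}_0,\ldots)$, so the two symbols differ only by a shift of the summation index.

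For part \textbf{(i)} I would exploit the fact that $\textbf{A}|z;\ldots\rangle = z|z;\ldots\rangle$ means $\langle \textbf{A}^*\textbf{A}\rangle = \bar z z$ exactly, and more generally for analytic $\varphi$ the symbol $\langle \varphi(\textbf{A}^*\textbf{A})\rangle$ is a normalized power series in $|z|^2$ whose coefficients involve the ratios of the $\mathcal{G}_\hbar(n)$. The key analytic input is that as $\hbar \to 0$ the averaging measure on the index $n$ concentrates (the weights $\tfrac{|z|^{2n}}{\prod_{k<n}\mathcal{G}_\hbar(k)}$ become sharply peaked), so that the weighted average of $\varphi$ converges to $\varphi$ evaluated at the classical value $\bar z z$. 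The same concentration argument handles $\langle \varphi(\textbf{A}\textbf{A}^*)\rangle$, the index shift by one becoming negligible in the limit. Here I would invoke \eqref{cova}, $\langle A\rangle \to z$, and the product rule $f *_\hbar g \to f\cdot g$ established after \eqref{hpf}, to identify the limit of a product of symbols with the product of limits.

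For part \textbf{(ii)} the idea is that, by \eqref{rel2}, $\textbf{A}\textbf{A}^* = \mathcal{G}_\hbar(\textbf{A}_0, c_1,\ldots,c_N)$, so applying the inverse function $\mathcal{G}_\hbar^{-1}(\cdot,c_1,\ldots,c_N)$ (whose existence we have assumed) recovers $\textbf{A}_0$ on the nose: $\mathcal{G}_\hbar^{-1}(\textbf{A}\textbf{A}^*,c_1,\ldots,c_N)=\textbf{A}_0$. Taking the $n$-th power and then the covariant symbol gives $\langle \textbf{A}_0^n\rangle$ on the left. For the limit I would apply part \textbf{(i)} with $\varphi = (\mathcal{G}_\hbar^{-1})^n$, using that $\mathcal{G}_\hbar \to \mathcal{G}_0$ as $\hbar \to 0$ (compare \eqref{ghf} with \eqref{funkcjagiezero}, the polynomials $\mathcal{P}_{l_i}$ degenerating to monomials $(\,\cdot\,)^{l_i}$), hence $\mathcal{G}_\hbar^{-1}\to \mathcal{G}_0^{-1}$ on the relevant interval; combined with \eqref{firsteq} applied to $\textbf{A}\textbf{A}^*$ this yields $(\mathcal{G}_0^{-1}(\bar z z,c_1,\ldots,c_N))^n$, which by the classical relation \eqref{casimir}, i.e. $\bar z z = \mathcal{G}_0(I_0,\ldots)$, equals $I_0^n$.

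The main obstacle I expect is justifying the interchange of the limit $\hbar \to 0$ with the infinite summation defining the covariant symbol, i.e. making the ``concentration of the averaging measure'' rigorous. One must control the tail of the series $\sum_n \tfrac{|z|^{2n}}{\prod_{k<n}\mathcal{G}_\hbar(k)}$ uniformly in $\hbar$ near $0$, and show the weights localize at the index $n_*$ for which $\hbar n_* \approx$ the classical occupation, so that dominated convergence or a Laplace-type estimate applies. The care needed is precisely that $\mathcal{G}_\hbar(n)=\mathcal{G}_\hbar(c_0+\hbar n,\ldots)$ depends on $\hbar$ and $n$ only through the combination $\hbar n$ staying finite, which is what lets the discrete average pass to evaluation at the classical point; handling the domain-of-convergence constraint $|z|<R_\hbar$ and the invertibility hypothesis on $\mathcal{G}_\hbar(\cdot,c_1,\ldots,c_N)$ simultaneously is the delicate part.
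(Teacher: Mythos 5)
Your strategy is viable but genuinely different from, and heavier than, the paper's. The paper never analyzes the weighted sums over the Fock index at all: part (i) is proved for the monomials $\varphi(t)=t^n$ by induction, using the exact ladder identity that follows from the eigenvalue equation \eqref{state}, namely
\begin{equation*}
\langle (\textbf{A}^*\textbf{A})^{n+1}\rangle \;=\; \langle \textbf{A}^*(\textbf{A}\textbf{A}^*)^{n}\textbf{A}\rangle \;=\; \bar z z\,\langle (\textbf{A}\textbf{A}^*)^{n}\rangle ,
\end{equation*}
together with $\lim_{\hbar\to0}\langle(\textbf{A}\textbf{A}^*)^n\rangle=\lim_{\hbar\to0}\langle(\textbf{A}^*\textbf{A})^n\rangle$, which comes from \eqref{rel3}--\eqref{rel2} since $\textbf{A}\textbf{A}^*$ and $\textbf{A}^*\textbf{A}$ differ only by the $O(\hbar)$ shift in the argument of $\mathcal{G}_\hbar$. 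The base case $\langle\textbf{A}^*\textbf{A}\rangle=\bar z z$ is exact (you noted this), and general analytic $\varphi$ is then obtained ``by linearity''. This algebraic induction completely bypasses the concentration/Laplace analysis that you yourself single out as the main obstacle and leave unproven: in your proposal the localization of the weights $|z|^{2n}/\prod_{k<n}\mathcal{G}_\hbar(k)$ at $\hbar n_*\approx I_0-c_0$ is only asserted, so as written your part (i) is incomplete precisely where the paper's trick makes the hard work unnecessary. To be fair, your route, if carried out, would yield strictly more (a quantitative semiclassical localization, uniform tail bounds), and the paper's own proof is itself not airtight on the interchange of $\lim_{\hbar\to 0}$ with the power series of $\varphi$, so the two arguments gloss over comparable analytic fine points; but the paper's skeleton is complete for all monomials with no asymptotics, while yours defers its central lemma.

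Two further cautions. First, you invoke \eqref{cova} and the rule $f*_\hbar g\to f\cdot g$ from Section \ref{sec:4}; those are established there for the Glauber coherent states on $\mathbb{C}^{N+1}$, not for the reduced coherent states \eqref{states}, and the classical-limit behaviour of the reduced product \eqref{cproduct} is exactly what Proposition \ref{prop2} feeds into --- relying on it here would be circular. Fortunately your concentration argument does not actually need it, but you should strike that sentence. Second, your part (ii) coincides with the paper's: $\mathcal{G}_\hbar^{-1}(\textbf{A}\textbf{A}^*,c_1,\ldots,c_N)=\textbf{A}_0$ exactly by \eqref{rel2}, then apply (i) with $\varphi=(\mathcal{G}_\hbar^{-1})^n$ and use $\mathcal{G}_\hbar\to\mathcal{G}_0$. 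Note (as affects both you and the paper equally) that this $\varphi$ depends on $\hbar$, so (i) is being applied slightly beyond its literal statement; a uniform-convergence remark on $\mathcal{G}_\hbar^{-1}\to\mathcal{G}_0^{-1}$ over the relevant compact interval would close that gap in either version.
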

\begin{proof}
(i) Since 
\begin{equation}
\lim_{\hbar \to 0} \langle \left(\mathcal{G}_\hbar ( \textbf{A}_0 , c_1, \ldots, c_N )\right)^n \rangle  = \lim_{\hbar \to 0} \langle \left(\mathcal{G}_\hbar (\textbf{A}_0 - \hbar , c_1, \ldots, c_N )\right)^n \rangle , 
\end{equation}
from \eqref{rel3} and \eqref{rel2} we have 
\begin{equation}
\lim_{\hbar \to 0 } \langle (\textbf{A}\textbf{A}^* )^n \rangle = \lim_{\hbar \to 0 } \langle (\textbf{A}^* \textbf{A} )^n \rangle ,
\end{equation}
for $n\in \mathbb{N}$. Let us assume that 
\begin{equation}\label{aastarlimit}
\lim_{\hbar \to 0 }  \langle (\textbf{A}\textbf{A}^* )^n \rangle = (\bar z z)^n . 
\end{equation}
Then
\begin{equation}
\langle (\textbf{A}^* \textbf{A} )^{n+1} \rangle = \bar z z \langle (\textbf{A}\textbf{A}^* )^n \rangle 
\end{equation}
and \eqref{aastarlimit} imply that 
\begin{equation}
\lim_{\hbar \to 0 } \langle (\textbf{A}^* \textbf{A} )^{n+1} \rangle  = \lim_{\hbar \to 0 } \langle (\textbf{A}\textbf{A}^* )^{n+1} \rangle = (\bar z z )^{n+1} . 
\end{equation}
So, we obtain \eqref{firsteq} by linearity. \\
(ii) In order to obtain \eqref{secondeq} we apply \eqref{firsteq} to $\mathcal{G}_\hbar^{-1}$ and note that $\lim_{ \hbar \to 0 } \mathcal{G}_\hbar = \mathcal{G}_0$. 

 \end{proof}

Let us note here that if $g_0$ is a constant, then the assumptions of the above proposition are satisfied.

One defines the Lie bracket $\{f, g\}_{\mathcal{G}_0}$ of the covariant symbols $f$ and $g$ defined in \eqref{covariantsymbols} by 
\begin{equation}\label{qpb}
\{ f, g \}_{\mathcal{G}_0} := \lim_{\hbar \to 0} \frac{-i}{\hbar } (f *_\hbar g - g *_\hbar f ).
 \end{equation}
 From Proposition \ref{prop2} we find that the covariant symbols of $\textbf{A} , \textbf{A}^* $ and $\textbf{A}_0$ are $z, \bar z $ and $I_0$, respectively. Thus, using the relations (\ref{rel1}-\ref{rel2}), we find the relations 
\begin{equation}\label{qpbx0x}
\{ I_0, z \}_{\mathcal{G}_0} = iz,
\end{equation}
\begin{equation}
\{I_0, \bar z \}_{\mathcal{G}_0} =  -i\bar z,
\end{equation}
\begin{equation}\label{qpbxy}
\{z, \bar z\}_{\mathcal{G}_0} =  -i\frac{\partial \mathcal{G}_0}{\partial I_0} (I_0),
\end{equation}
which are the reduced versions of the ones presented in (\ref{pbx0x}-\ref{pbxy}).

From (\ref{qpbx0x}-\ref{qpbxy}) we see that the bracket \eqref{qpb} is the Nambu bracket defined in (\ref{pbx0x}-\ref{pbxy}). So, $\{ f, g \}_{\mathcal{G}_0}$ is the Poisson bracket of the covariant symbols $f$ and $g $ of the operators $\textbf{F} $ and $ \textbf{G} $ belonging to the quantum Kummer shape algebra $\mathcal{Q}_{\mathcal{G}_\hbar } (\mathcal{H}_{c_1 , \ldots , c_N })$.

Therefore, in the classical limit $\hbar \to 0$ the quantum Kummer shape, i.e. the operator algebra defined by the operators (\ref{rel1}-\ref{rel2}), corresponds to the classical Kummer shape (\ref{pbx0x}-\ref{pbxy}) with Nambu bracket $\{\cdot , \cdot \}_{\mathcal{C}}$ defined by structural function $\mathcal{G}_0$, see \eqref{casimir} and \eqref{nambu}.

We summarize our considerations in the following proposition. 

\begin{prop}
The passing to the classical limit  $\hbar \to 0 $ intertwines both reduction procedures, quantum and classical, i.e.

Figure 2. 

\begin{center}
\xymatrix @C=2cm @R=3cm {
*+[F] \txt{Quantum system of $N+1$ \\ harmonic oscillators} \ar[d]^{\txt{Quantum \\ reduction}}\ar[r]^{\hbar \to 0} &*+[F]\txt{Classical system of $N+1$\\ harmonic oscillators}\ar[d]^{\txt{Classical\\ reduction}} \\
*+[F]\txt{Quantum Kummer shape}\ar[r]^{\hbar \to 0} & *+[F]\txt{Classical Kummer shape}
}
\end{center}
\end{prop}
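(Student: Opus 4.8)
The plan is to verify the commutativity of Figure~2 by tracking the three generators and their brackets along the two composite paths, assembling the correspondences already established in this section and in Sections~\ref{sec:kummer} and~\ref{sec:qkummer}.

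First I would record what each arrow does at the level of generators. The top horizontal arrow was established at the beginning of Section~\ref{sec:4}: via the Glauber coherent states \eqref{Glauberstate} and the covariant symbol \eqref{covf} one has $\langle A_k\rangle = I_k$, $\langle A\rangle \underset{\hbar\to 0}{\longrightarrow} z$ and $\langle\textbf{H}\rangle\underset{\hbar\to 0}{\longrightarrow} H$, while the rescaled $*_\hbar$-commutator converges to the Poisson bracket \eqref{pb}; hence the commutation relations (\ref{comrel}-\ref{comij}) pass in the limit to the classical relations (\ref{pbx0x}-\ref{2.18}). The left vertical arrow is the quantum reduction of Section~\ref{sec:qkummer}: fixing the common eigenvalues $c_1,\ldots,c_N$ of $A_1,\ldots,A_N$ and restricting to $\mathcal{H}_{c_1,\ldots,c_N}$ yields the reduced operators $\textbf{A}_0,\textbf{A},\textbf{A}^*$ obeying (\ref{rel1}-\ref{rel2}) with structural function $\mathcal{G}_\hbar$. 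The right vertical arrow is the classical $\mathbb{T}^N$-reduction of Section~\ref{sec:kummer}: fixing the value $(c_1,\ldots,c_N)$ of the momentum map \eqref{mmap} produces the classical Kummer shape algebra $\mathcal{K}_{\mathcal{G}_0}(c_1,\ldots,c_N)$ on $\mathcal{C}^{-1}(0)$ with the Nambu bracket \eqref{nambu}.

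Next I would supply the bottom horizontal arrow from the reduced coherent states \eqref{states} and the reduced $*_\hbar$-product \eqref{cproduct}. Here Proposition~\ref{prop2} is the crucial input: it gives that the covariant symbols of $\textbf{A}_0,\textbf{A},\textbf{A}^*$ are $I_0,z,\bar z$ in the limit $\hbar\to 0$, and the limiting bracket $\{\cdot,\cdot\}_{\mathcal{G}_0}$ of \eqref{qpb} produces exactly (\ref{qpbx0x}-\ref{qpbxy}), which are the Nambu relations of the classical Kummer shape. With the four arrows in hand, commutativity reduces to the assertion that the two composites---``$\hbar\to 0$ then classical reduction'' and ``quantum reduction then $\hbar\to 0$''---deliver the same three generators $I_0,x,y$ (equivalently $I_0,z,\bar z$) and the same bracket. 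The generator correspondence is immediate from the above, and the two brackets coincide because each is the Nambu bracket determined by a single structural function; the needed consistency $\lim_{\hbar\to 0}\mathcal{G}_\hbar=\mathcal{G}_0$ follows by comparing \eqref{ghf} with \eqref{funkcjagiezero}, using that each polynomial $\mathcal{P}_{l_i}$ degenerates to the monomial $(\,\cdot\,)^{l_i}$ as $\hbar\to 0$.

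The step I expect to be the main obstacle is checking that the quantum reduction is genuinely compatible with the classical limit, i.e.\ that fixing the discretized eigenvalues $c_i=\hbar\sum_{j}\rho_{ij}n_j$ of \eqref{lambdai} in the quantum picture is matched, in the limit, by fixing the continuous values $(c_1,\ldots,c_N)$ of $\textbf{J}$ in the classical picture. Concretely, I would need to argue that as $\hbar\to 0$ with the $n_j\to\infty$ so that the $c_i$ stay fixed, the reduced coherent states \eqref{states} and their covariant symbols behave as required for Proposition~\ref{prop2} to apply on the common invariant domain $\mathcal{D}$; this is exactly what makes the lower-left corner's $\hbar\to 0$ limit well defined and equal to the classically reduced algebra. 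Once this compatibility is granted, the matching of generators and of brackets along the two paths closes the diagram.
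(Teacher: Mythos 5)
Your proposal is correct and follows essentially the same route as the paper: the paper gives no separate proof for this proposition but presents it as a summary of the preceding considerations of Section~\ref{sec:4}, namely the covariant-symbol correspondences \eqref{covak}--\eqref{cova} for the top arrow, the reductions of Sections~\ref{sec:kummer} and~\ref{sec:qkummer} for the vertical arrows, and Proposition~\ref{prop2} together with the limit bracket \eqref{qpb} yielding (\ref{qpbx0x}--\ref{qpbxy}) for the bottom arrow, exactly as you assemble them. Your closing remark on the compatibility of the discretized eigenvalues $c_i=\hbar\sum_j\rho_{ij}n_j$ with fixed classical momentum values goes slightly beyond what the paper makes explicit, but it does not alter the argument.
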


\section{Reproducing measure for the reduced coherent states}\label{sec:7}

In the previous section in order to construct the coherent state map for a classical Kummer shape we have defined, see \eqref{nowystan}, the complex analytic map $K_{c_1, \ldots, c_N} : \mathbb{C} \to \mathcal{H}_{c_1, \ldots, c_N}$ of $\mathbb{C}$ into reduced Hilbert space $\mathcal{H}_{c_1, \ldots, c_N}$. Now we will derive a measure $d\mu_{c_1, \ldots, c_N} (\bar z , z) $ which gives the resolution \eqref{residentity} of identity for the coherent states map \eqref{nowystan}.

 Equivalently one can rewrite \eqref{residentity} as 
\begin{equation}\label{repproperty}
\psi (w) = \int_{\mathbb{C}} \psi (z) K_{c_1, \ldots, c_N} ( \bar z , w ) d\mu_{c_1, \ldots, c_N}  (\bar z , z ) , 
\end{equation}
where 
\begin{equation}\label{caf}
\psi (w) :=  \langle \psi | w; c_1, \ldots, c_N \rangle 
\end{equation}
for $|\psi \rangle \in \mathcal{H}_{c_1, \ldots, c_N}$, and 
\begin{equation}\label{kernelog}
K_{c_1, \ldots, c_N} ( \bar z , w ) = \langle z ; c_1, \ldots, c_N | w; c_1, \ldots, c_N \rangle . 
\end{equation}
 Therefore, \eqref{kernelog} is a reproducing kernel for the complex analytic functions defined by \eqref{caf}. One can express this kernel in terms of the hypergeometric function $_rF_s \left[\begin{array}{cccc} 
\alpha_1, & \alpha_2, & \ldots, & \alpha_r \\
\beta_1, & \beta_2, & \ldots , & \beta_s \end{array} ; \cdot \right]$ , where $r = 1 + \sum_{i\in N_n} |l_i|$ and $s= \sum_{i\in N_p} l_i $. Namely, using the equalities 
\begin{equation} 
\hbar^{v_i+nl_i} (v_i+nl_i)! = \left\{\begin{array}{ll}
\hbar^{v_i} v_i! \mathcal{P}_{l_{i}}(\hbar v_{i}) \ldots \mathcal{P}_{l_{i}}(\hbar v_{i} + \hbar l_{i} (n-1)), & l_i \geq 0 \\
\frac{\hbar^{v_i} v_i!}{ \mathcal{P}_{l_{i}}(\hbar v_{i}) \ldots \mathcal{P}_{l_{i}}(\hbar v_{i} + \hbar l_{i} (n-1))}, & l_i <0
\end{array}\right.
\end{equation}
we find that
\begin{multline}\label{kernelseries}
K_{c_1, \ldots, c_N} (\bar z , w)  = \frac{1}{v_0!\ldots v_N!} \mbox{ } _rF_s\left[\begin{array}{cccc} 
\alpha_1, & \alpha_2, & \ldots, & \alpha_r \\
\beta_1, & \beta_2, & \ldots , & \beta_s \end{array} ; \frac{\bar z w }{l_0^{l_0} \ldots l_N^{l_N} \hbar^{l_0+ \ldots + l_N} }\right] ,
\end{multline} 
where 
\begin{equation}
(\alpha_1 , \alpha_2 , \ldots, \alpha_r ) = \left(1, \frac{v_{i_1}}{l_{i_1} }, \ldots , \frac{v_{i_1}-(-l_{i_1}-1)}{l_{i_1}}, \ldots , \frac{v_{i_{r-1}}}{l_{i_{r-1}}},  \ldots , \frac{v_{i_{r-1}}-(-l_{i_{r-1}}-1)}{l_{i_{r-1}}}\right)
\end{equation}
and
\begin{equation}
(\beta_1, \beta_2, \ldots, \beta_s ) = \left(\frac{v_{j_1}+1}{l_{j_1} }, \ldots , \frac{v_{j_1}+l_{j_1}}{l_{j_1}}, \ldots,  \frac{v_{j_{s}}+1}{l_{j_{s}}},  \ldots , \frac{v_{j_{s}}+l_{j_{s}}}{l_{j_{s}}} \right).
\end{equation}

Note here that in the case when $N_n \neq \emptyset $, i.e. when $\dim_{\mathbb{C}} \mathcal{H}_{c_1, \ldots, c_N } < +\infty $, the kernel \eqref{kernelseries} is a polynomial function of the variable $\bar z w$. 

\begin{prop}\label{prop4}
The reproducing measure in \eqref{repproperty} has the following form 
\begin{equation}\label{densityfunction}
d\mu_{c_1, \ldots, c_N} ( \bar z , z ) = \rho_{c_1, \ldots, c_N} (|z|^2)   d |z|^2d\psi_0 = i \rho_{c_1, \ldots, c_N} (|z|^2)dz d\bar z, 
\end{equation}
where $z= |z|e^{i\psi_0}$ and 
\begin{multline}\label{rofunkcja}
\rho_{c_1, \ldots, c_N} (|z|^2  ) := \frac{1}{2\pi l_0^2 (\hbar )^{N+1} }|z|^{2\left(\frac{v_0+1}{l_0}-1\right)}\times\\
 \times \int_{[0, +\infty )^N} x_1^{v_1 -\frac{l_1(v_0+1)}{l_0}} \ldots x_N^{v_N - \frac{l_N(v_0+1)}{l_0}}e^{-\frac{1}{\hbar}(|z|^{\frac{2}{l_0}} (x_1^{l_1} \ldots x_N^{l_N} )^{\frac{1}{l_0}}+x_1 +\ldots +x_N } dx_1 \ldots dx_N.
\end{multline}
\end{prop}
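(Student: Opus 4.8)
The plan is to convert the operator identity \eqref{residentity} into a one-dimensional moment problem and then solve it by representing each factorial occurring in the normalization of the states \eqref{nowystan} as a Gamma integral. Since the Fock vectors $|v_0+nl_0,\ldots,v_N+nl_N\rangle$ are orthonormal and the coefficients of \eqref{nowystan} are analytic monomials $z^n$, it is natural to look for a rotationally invariant measure, so I would make the ansatz $d\mu_{c_1,\ldots,c_N}=\rho_{c_1,\ldots,c_N}(|z|^2)\,d|z|^2\,d\psi_0$ with $z=|z|e^{i\psi_0}$. Inserting \eqref{nowystan} into \eqref{residentity} and integrating over $\psi_0\in[0,2\pi)$ kills every off-diagonal term $\bar z^{\,m}z^{\,n}$ through the Kronecker delta $\delta_{mn}$, and the identity collapses to the single requirement
\begin{equation}
2\pi\int_0^\infty t^{\,n}\,\rho_{c_1,\ldots,c_N}(t)\,dt=\big(\hbar^{\,l_0+\cdots+l_N}g_0^2\big)^n\prod_{i=0}^N (v_i+nl_i)!,\qquad t=|z|^2,
\end{equation}
for every admissible $n$. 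Thus it suffices to exhibit one density on $[0,\infty)$ with these prescribed moments; existence is all that is needed, so the moment problem need not be determinate.

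First I would write each factorial as a Gamma integral, $(v_i+nl_i)!=\hbar^{-(v_i+nl_i+1)}\int_0^\infty x_i^{\,v_i+nl_i}e^{-x_i/\hbar}\,dx_i$, which turns the right-hand side into
\begin{multline}
\big(\hbar^{\,l_0+\cdots+l_N}g_0^2\big)^n\prod_{i=0}^N (v_i+nl_i)! = \\
g_0^{2n}\,\hbar^{-\sum_i (v_i+1)}\int_{[0,\infty)^{N+1}}\Big(\prod_{i=0}^N x_i^{\,l_i}\Big)^{\!n}\prod_{i=0}^N x_i^{\,v_i}\,e^{-\frac1\hbar\sum_i x_i}\,\prod_i dx_i .
\end{multline}
The key observation is that the entire $n$-dependence sits in the single power $\big(g_0^2\prod_i x_i^{l_i}\big)^{n}$. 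Hence $\rho_{c_1,\ldots,c_N}$ is forced to be the push-forward of the finite measure $\hbar^{-\sum_i(v_i+1)}\prod_i x_i^{v_i}e^{-\sum_i x_i/\hbar}\prod_i dx_i$ under the map $(x_0,\ldots,x_N)\mapsto t=g_0^2\prod_i x_i^{l_i}$; with this choice the moment identity above holds term by term in $n$, and the overall $n$-independent constant is then fixed by matching a single moment.

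It remains to make the push-forward explicit, and this is where the actual computation lies. After reindexing so that $l_0>0$, I would disintegrate along the fibres of $t=g_0^2\prod_i x_i^{l_i}$ using $x_1,\ldots,x_N$ as fibre coordinates and solving $x_0=(t/g_0^2)^{1/l_0}\big(\prod_{i\ge1}x_i^{l_i}\big)^{-1/l_0}$; at fixed $x_1,\ldots,x_N$ this gives $dx_0=\tfrac{x_0}{l_0 t}\,dt$. Substituting $x_0^{\,v_0+1}=(t/g_0^2)^{(v_0+1)/l_0}\prod_{i\ge1}x_i^{-l_i(v_0+1)/l_0}$ integrates out $x_0$ and produces a prefactor proportional to $t^{(v_0+1)/l_0-1}$ together with the shifted exponents $v_i-\tfrac{l_i(v_0+1)}{l_0}$ on the surviving variables, reproducing the density \eqref{rofunkcja}; the two forms of $d\mu_{c_1,\ldots,c_N}$ in \eqref{densityfunction} then coincide via $d|z|^2\,d\psi_0=2\,dx\,dy=i\,dz\,d\bar z$. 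I expect the main obstacle to be analytic rather than algebraic: one must justify Fubini and the change of variables, which imposes the convergence conditions $v_i-\tfrac{l_i(v_0+1)}{l_0}>-1$ on the remaining $N$-fold integral and requires that $\prod_i x_i^{l_i}$ be a genuine submersion onto $]0,\infty[$ with the stated fibres. This is the setting $N_n=\emptyset$, i.e. the infinite-dimensional case in which, for constant $g_0$, $z$ ranges over all of $\mathbb{C}$; checking that the resulting $\rho_{c_1,\ldots,c_N}$ indeed returns all the prescribed moments then closes the argument.
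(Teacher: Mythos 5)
Your proposal is correct, and its computational core --- solving $x_0=(t/g_0^2)^{1/l_0}\prod_{i\geq 1}x_i^{-l_i/l_0}$ with $dx_0=\frac{x_0}{l_0 t}\,dt$ and integrating out the fibre variables --- is exactly the change of variables in the paper's proof, but you package it differently. The paper never writes a moment condition: it applies the projection $P_{c_1,\ldots,c_N}$ to the Glauber resolution \eqref{glauber1}, uses \eqref{withfactor} to turn $P_{c_1,\ldots,c_N}|z_0,\ldots,z_N\rangle$ into a multiple of $|z;c_1,\ldots,c_N\rangle$, and then changes variables $(x_0,\phi_0)\mapsto (x,\psi_0)$ in the resulting Gaussian integral. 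You instead reduce \eqref{residentity} to the diagonal moments $2\pi\int_0^\infty t^n\rho\,dt=(\hbar^{l_0+\cdots+l_N}g_0^2)^n\prod_i(v_i+nl_i)!$ and re-synthesize the multidimensional integral by writing each factorial as a Gamma integral; those Gamma integrals are precisely the radial parts of the projected Glauber measure, so your route reverse-engineers the paper's. What the paper's route buys is that existence and positivity of the measure are inherited from \eqref{glauber1}, together with the conceptual point (emphasized in the introduction) that the resolution of identity comes from combining classical and quantum reduction; what yours buys is a self-contained argument inside $\mathcal{H}_{c_1,\ldots,c_N}$ that makes transparent why $\rho_{c_1,\ldots,c_N}$ is an $N$-fold Mellin-type transform of elementary densities, at the modest cost of positing the rotation-invariant ansatz (harmless, since only existence of one reproducing measure is claimed).

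Two refinements. First, your closing restriction to $N_n=\emptyset$ is unnecessary and would exclude the mixed-sign, finite-dimensional case that the paper actually exercises (cf.\ \eqref{rodens} and the example with $(l_0,l_1)=(1,-1)$): once you reindex so that $l_0>0$, the map $(x_0,\ldots,x_N)\mapsto g_0^2\prod_i x_i^{l_i}$ is a submersion of $]0,+\infty[^{N+1}$ onto $]0,+\infty[$ for arbitrary signs of $l_1,\ldots,l_N$, and the moment identity is needed only for admissible $0\leq n\leq L$, where $v_i+nl_i\geq 0$. Likewise the conditions $v_i-\frac{l_i(v_0+1)}{l_0}>-1$ are not required: that exponent can drop to $-1$ or below only when $l_i>0$, and then $x_i\to 0$ forces $x_0\propto x_i^{-l_i/l_0}\to\infty$, so the factor $e^{-x_0/\hbar}$ kills the singularity and the integral converges unconditionally. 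Second, bookkeeping: carried out carefully, your ``match one moment'' normalization yields the prefactor $\frac{1}{2\pi l_0\,\hbar^{N+1+v_0+\cdots+v_N}}$ for a measure with $\psi_0\in\,]0,2\pi]$, and an exponent $(x_1^{l_1}\cdots x_N^{l_N})^{-1/l_0}$ in the exponential; this is the internally consistent version --- it agrees with the paper's own $N=1$ specialization \eqref{rodens} and with the substitution $x_0=x^{1/l_0}(x_1^{l_1}\cdots x_N^{l_N})^{-1/l_0}$ inside the paper's proof, whereas the printed $\hbar^{N+1}$, $l_0^2$ and $(x_1^{l_1}\cdots x_N^{l_N})^{+1/l_0}$ in \eqref{rofunkcja} are slips (the extra $l_0$ comes from keeping the Jacobian factor $1/l_0^2$ while $\psi_0$ still sweeps the $l_0$-fold range inherited from $\phi_0\in\,]0,2\pi]$; every example in the paper has $l_0=1$, so the discrepancy is invisible there). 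With these two points tidied up, your argument is a complete and valid proof of the proposition.
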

\begin{proof}
Applying the projection $P_{c_1, \ldots, c_N} $ on the both sides of \eqref{glauber1} we find that 
\begin{multline}\label{dziesiec}
\mathbbm{1}_{c_1, \ldots, c_N} = \int_{\mathbb{C}^{N+1}} \frac{ P_{c_1, \ldots, c_N}|z_0, ... , z_N \rangle \langle z_0, ... , z_N |P_{c_1, \ldots, c_N} } {\langle z_0, ... , z_N | z_0, ... , z_N \rangle } d\nu_\hbar (\bar z_0,..., \bar z_N, z_0,..., z_N )  = \\
=  \int_{\mathbb{C}^{N+1}} |z;c_1, \ldots, c_N\rangle \langle z; c_1, \ldots, c_N |  \frac{|z_0|^{2v_0} \ldots |z_N|^{2v_N} }{\hbar^{v_0+\ldots +v_N}}\times \\
\times e^{\frac{1}{\hbar}(|z_0|^2 + \ldots + |z_N|^2)}d\nu_\hbar (\bar z_0,..., \bar z_N, z_0,..., z_N ). 
\end{multline}
where the second equality in \eqref{dziesiec} follows from \eqref{withfactor}. After passing to the polar coordinates we obtain 
\begin{multline}
\mathbbm{1}_{c_1, \ldots, c_N}= \frac{1}{\hbar^{v_0 + \ldots + v_N} (2\pi \hbar )^{N+1}} \times \\
\times \int_{\Omega_{N+1}} |z;c_1, \ldots, c_N\rangle \langle z; c_1, \ldots, c_N | x_0^{v_0} \ldots x_N^{v_N} e^{\frac{1}{\hbar} (x_0 + \ldots + x_N)} dx_0 \wedge d\phi_0 \wedge \ldots \wedge dx_N \wedge d\phi_N , 
\end{multline}
where $x_k = |z_k|^2$. From \eqref{psizerodef} and \eqref{cvz} we have 
\begin{align}
\notag 
\phi_0 & = \frac{1}{l_0}\left( \psi_0 - \sum_{j=1}^N l_j \phi_j\right) , \\
x_0 & = x^{\frac{1}{l_0}} (x_1^{l_1} \ldots x_N^{l_N})^{-\frac{1}{l_0}} , 
\end{align}
where $x:= |z|^2$. Thus, one has 
\begin{multline}
dx_0 \wedge d\phi_0 \wedge dx_1 \wedge d\phi_1 \wedge \ldots \wedge dx_N \wedge d\phi_N = \\
=\frac{1}{l_0^2} x^{\frac{1}{l_0}-1} (x_1^{l_1} \ldots x_N^{l_N})^{-\frac{1}{l_0}} dx \wedge d\psi_0 \wedge dx_1 \wedge d\phi_1 \wedge \ldots \wedge dx_N \wedge d\phi_N  . 
\end{multline}
Changing in \eqref{dziesiec}  the variables $(x_0, \phi_0)$ on the variables $(x, \psi_0 )$ we find that the density function in \eqref{densityfunction} is given by \eqref{rofunkcja}. Note here that $dx \wedge d\psi_0 = i dz \wedge d\bar z$. 
\end{proof}

In such a way we have found the family of reproducing kernels indexed by the vacuum states $|c_0, \ldots, c_N \rangle $, see \eqref{v1} or \eqref{vacuum}, where $(c_0, \ldots, c_N)^T \in \mathbb{R}^{N+1}$ are related to $(v_0, \ldots, v_N)^T \in \mathbb{Z}^{N+1}$ by \eqref{lambdai}. The reproducing measures for this kernels are given explicity by the integral formula \eqref{rofunkcja}. The examples presented below shows that this family of reproducing kernels includes the known reproducing kernels as well as the new ones. Let us mention also that for the exponents $(l_0, \ldots, l_N)^T \in \mathbb{Z}^{N+1}$ assuming positive and negative values we obtain the reproducing kernels in the finite dimensional Hilbert spaces realized by the polynomials of the complex variable $z \in \mathbb{C}$.

In the case $N=1$ if $l_0, l_1$ are non-negative one has
\begin{equation}\label{elpositive}
K_{c_1} ( \bar z , w) = \frac{1}{v_0! v_1!}\mbox{ } _1F_{l_0+l_1} \left[\begin{array}{llllll}
 & &1 & & & \\
\frac{v_0+1}{l_0} ,& \ldots, &\frac{v_0+l_0}{l_0},&\frac{v_1+1}{l_1}, &\ldots, & \frac{v_1+l_1}{l_1}
\end{array} ; \frac{\bar z w }{\hbar^{n(l_0+l_1)} l_0^{l_0}l_1^{l_1}}\right] , 
\end{equation}
and if $l_0 \geq 0, l_1 <0 $ one has
\begin{equation}\label{eldifferent}
K_{c_1} ( \bar z , w) = \frac{1}{v_0! v_1!}\mbox{ } _{1+|l_0|}F_{l_0} \left[\begin{array}{llll}
1 & \frac{v_1}{l_1}, &\ldots  ,& \frac{v_1-(-l_1-1)}{l_1}  \\
 & \frac{v_0+1}{l_0} ,& \ldots, &\frac{v_0+l_0}{l_0}
\end{array} ; \frac{\bar z w }{\hbar^{n(l_0+l_1)} l_0^{l_0}l_1^{l_1}}\right] .
\end{equation}
The reproducing measure for the above kernels is given by the density function
\begin{equation}\label{rodens}
\rho_{c_1} (|z|^2) = \frac{1}{2\pi \hbar^{v_0 + v_1 +2}l_0^2} |z|^{2\left(\frac{v_0+1}{l_0} -1\right)} \int_0^\infty  x_1^{v_1 - \frac{l_1(v_0+1)}{l_0}} e^{\frac{1}{\hbar} (|z|^{\frac{2}{l_0}}x_1^{-\frac{l_1}{l_0}} +x_1)} dx_1 . 
\end{equation}

At the end we present two examples. 
\begin{example}
For $l_0=l_1 = 1$ and $|0, v_1\rangle$ as a vacuum state, $v_1 \in \mathbb{N}$, the reproducing kernel \eqref{elpositive} takes the form 
\begin{equation}
K_{c_1}(\bar z , w) = \frac{1}{v_1!} \mbox{ }  _0F_1 \left[- ; v_1+1 ; \frac{\bar z w}{\hbar^2}\right] .
\end{equation}
The density function \eqref{rodens} is given by 
\begin{multline}\label{pro}
\rho_{c_1}( |z|^2 ) = \frac{1}{2\pi\hbar^{v_1+2}} \int_{0}^{+\infty} x_1^{v_1-1} e^{-\frac{1}{\hbar}\left(\frac{|z|^2}{x_1} +x_1\right)} dx_1 =\\
=\frac{1}{2\pi \hbar^2 } \left(\frac{|z|^2}{\hbar^2}\right)^{\frac{v_1}{2}} K_{v_1}\left(2\frac{|z|}{\hbar}\right) ,  
\end{multline}
where $K_{v_1}$ is the modified Bessel function of the second kind. For the last formula see entry 3.471.9 in \cite{Ryzhik}. Note that the density function \eqref{pro} is the same as the one presented in \cite{Aneta}.
\end{example}

\begin{example}
Assuming $l_0=1, l_1 = -1$ and choosing $|0, v_1 \rangle $ as the vacuum state, one can rewrite \eqref{eldifferent} in the following form
\begin{equation}
K_{c_1}(\bar z ,w) = \frac{1}{v_1!} (1 + \bar z w)^{v_1} . 
\end{equation}
From \eqref{rodens} one has 
\begin{equation}
\rho_{c_1} (|z|^2) = \frac{1}{2\pi  \hbar^{v_1+2}} \int_0^{+\infty} x_1^{v_1+1} e^{-\frac{1}{\hbar} (1+|z|^2)x_1 } dx_1 = \frac{1}{2\pi} \frac{(v_1+1)!}{(1+ |z|^2)^{v_1+2}} . 
\end{equation}
\end{example}

These examples are related to the Examples \ref{laguerre} and \ref{krawtchouk} presented in Section \ref{sec:examples}.

\section{Quantum integrability}\label{sec:integrability}

In Section \ref{sec:kummer} we have shown that the hamiltonian system of $N+1$ harmonic oscillators described by the Hamiltonian \eqref{H} is integrable in quadratures after reduction to the classical Kummer shape. The natural question arises: whether the quantum system described by the the quantum Hamiltonian \eqref{qH} is integrable in the similar way.

Let us mention here that the integration of any quantum system is equivalent to finding of spectral resolution of its Hamiltonian $\textbf{H}$. Hence, one can obtain the corresponding unitary flow $\mathbb{R} \ni t \mapsto e^{\frac{i}{\hbar} t \textbf{H}} \in \mbox{Aut} (\mathcal{H})$ in the Hilbert space $\mathcal{H}$, which defines the time evolution $ | \psi (t) \rangle = e^{\frac{i}{\hbar } t \textbf{H}} | \psi  \rangle $ and $\textbf{F}(t) = e^{-\frac{i}{\hbar } t \textbf{H}} \textbf{F} e^{\frac{i}{\hbar } t \textbf{H}}$ of a state $|\psi  \rangle \in \mathcal{H}$ and an operator $\textbf{F}$, respectively.

Treating the quantum system \eqref{qH} in the Heisenberg picture one finds that the operators $A_1, \ldots , A_N  $ form an involutive system of quantum integrals of motion with the discrete set of common eigenvalues defined in \eqref{lambdai}. The other three observables $\textbf{A}_0 (t)$, $\textbf{X}(t) := \frac{1}{2}(\textbf{A} (t) + \textbf{A}^* (t) ) $ and $\textbf{Y}(t) := \frac{1}{2i} (\textbf{A}(t) - \textbf{A}^* (t))$ satisfy the Heisenberg equations: 
\begin{equation}\label{heis1}
i\hbar \frac{d}{dt} \textbf{A}_0 (t) = [\textbf H, \textbf{A}_0 (t)]  = 2 [ \textbf{X}(t) , \textbf{A}_0 (t) ] = 2 i\hbar \textbf{Y}(t), 
\end{equation}
\begin{equation}\label{heis2}
i\hbar \frac{d}{dt} \textbf{X}(t) = [\textbf{H} , \textbf{X}(t)] = [H_0 (\textbf{A}_0(t), c_1, \ldots , c_N), \textbf{X}(t) ], 
\end{equation}
\begin{multline}\label{heis3}
i \hbar \frac{d}{dt} \textbf{Y}(t) = [\textbf{H} , \textbf{Y}(t) ] = [H_0 (\textbf{A}_0(t), c_1, \ldots , c_N), \textbf{Y}(t)] + \\
i (\mathcal{G}_\hbar (\textbf{A}_0 (t) , c_1, \ldots , c_N ) - \mathcal{G}_\hbar (\textbf{A}_0 (t) - \hbar , c_1, \ldots , c_N )), 
\end{multline}
where $\textbf{H}$ is defined by \eqref{Hr}. Additionally, from \eqref{Hr} and \eqref{heis1} one has 
\begin{equation}\label{heis}
\textbf{X}(t) = \frac{1}{2} \left[\textbf{H} - H_0 (\textbf{A}_0(t), c_1, \ldots , c_N )\right], 
\end{equation}
\begin{equation}\label{heiss}
\textbf{Y}(t) = \frac{1}{2} \frac{d}{dt} \textbf{A}_0 (t) .  
\end{equation}
Since, $\textbf{H} (t) = \textbf{H} $ is an integral of motion, the equations  \eqref{heis} and \eqref{heiss} allow us to express $\textbf{X}(t)$ and $\textbf{Y}(t)$ by $\textbf{A}_0 (t)$ which is the solution of the operator equation 
\begin{multline}\label{nazero}
\left(\frac{d}{dt} \textbf{A}_0 (t)\right)^2 =   2 [\mathcal{G}_\hbar (\textbf{A}_0 (t) , c_1, \ldots , c_N) + \mathcal{G}_\hbar (\textbf{A}_0 (t) -\hbar, c_1, \ldots , c_N ) ]- \\
\left[ \textbf{H} - H_0 (\textbf{A}_0(t) , c_1, \ldots , c_N )\right]^2. 
\end{multline}
In the case when $l_0, \ldots , l_N $ are positive integer numbers, taking mean values of both sides of Heisenberg equations \eqref{heis1} and \eqref{heis2} on the coherent states \eqref{states}, one obtains equations on the symbols $\langle \textbf{A}_0 (t) \rangle, \langle \textbf{X}(t) \rangle $ and $\langle \textbf{Y}(t) \rangle $ of the operators $\textbf{A}_0 (t), \textbf{X}(t)$ and $ \textbf{Y}(t)$, which in the limit $\hbar \to 0$ tend to the Hamilton equations (\ref{pbx0x}-\ref{pbxy}).

The above shows the correspondence between the quantum and the Hamiltonian considered within the framework of the quantum and classical Kummer shape algebras. However, unlike the reduced Heisenberg equation \eqref{nazero} its classical counterpart \eqref{difi0} is integrated in quadratures, thus allowing us to solve the Hamilton equations (\ref{i_0t}-\ref{dyt}). As we already mentioned, the solution of \eqref{nazero} is found by solving the spectral problems for the reduced Hamiltonians \eqref{Hr} indexed by the eigenvalues $c_1, \ldots , c_N$. The operators \eqref{Hr} are three-diagonal, so, one of the methods of their "diagonalization" is the given relation of the theory of Jacobi operators to the orthogonal polynomials theory, including finite orthogonal polynomials case. Recall that the last case occurs when the exponents $l_0, \ldots , l_N$ have mixed signs. The examples of integrable Hamiltonians of the form \eqref{Hr}, which describe some quantum optical processes, can be found in \cite{AO2}. Other examples are presented in the next section of the paper. 

\section{Examples}\label{sec:examples}

We will present three examples which exhibit the usefulness of Kummer shape algebra concept. In choosing these examples first of all we were  motivated by their importance in nonlinear optics (classical and quantum), e.g. see \cite{Mil}. Secondly, we want to stress the importance of quantum Kummer shape algebras for the spectral theory of Jacobi operators \cite{Tesch} and the theory of special functions \cite{AO1}, \cite{Rahman}.  %examples presented here will illustrate the meaning of quantum Kummer shapes for the various properties of mathematical physics. In particular: for the spectral theory of Jacobi operators [...], for the theory of special functions \cite{AO1} and for the non-commutative geometry, see e.g. [...]. 

Example \ref{laguerre} and Example \ref{krawtchouk} are related to Laguerre and Krawtchouk polynomials, respectively. The corresponding Hamilton operators, see \eqref{qHr} and \eqref{qHr2}, model the interactions between two modes and describe rich variety of quantum optical phenomena since they give the possibility of correlations between these modes, see \cite{ATer}, \cite{Mil}.

Example \ref{qheisenberg} will concern the $q$-deformed Heisenberg-Weyl algebra as well as the polynomials which are a $q$-deformation of the Hermite polynomials. We chose this case in order to illustrate the connection of quantum Kummer shape algebras with the theory of $q$-special functions.

\begin{example}[\textit{Laguerre polynomials case}]\label{laguerre}
%Here we present the two-mode system which describes rich variety of quantum (classical) optical phenomena since it gives the possibility of correlations between these modes. Examples of such phenomena one can find e.g. in \cite{Mil}. We mention that the integration of the quantum Hamiltonian of this system is strictly related to the theory of Laguerre polynomials.

We will assume $N=1$ and $(l_0, l_1)=(1,1)$. As Hamiltonian \eqref{H} we take
\begin{equation}\label{pH}
H=|z_0|^2 + |z_1|^2 + z_0z_1 + \bar z_0 \bar z_1. 
\end{equation}

In order to define the variables $I_0, I_1, \psi_0 , \psi_1$ we choose the $\rho $ matrix in the form %and its inverse $\kappa $
$
\rho = \frac{1}{2}\left(\begin{array}{cc} 
1 & 1 \\
-1 & 1 
\end{array}\right) , 
$ what gives 
\begin{align}
I_0= \frac{1}{2} (|z_0|^2 +|z_1|^2), & \quad \psi_0 = \phi_0 +\phi_1, \\
I_1 = \frac{1}{2} (|z_1|^2 - |z_0|^2), & \quad \psi_1 = \phi_1 -  \phi_0 ,
\end{align}
Thus, coordinates $I_0, I_1$ go through the cone  in $\mathbb{R}^2$ defined by 
\begin{align}
\notag I_0-I_1 > 0 \\
\label{pte0} I_0+I_1> 0, 
\end{align}
while for $(\psi_0, \psi_1 ) $ one has $0< \psi_0 \leq 4\pi $ and $0< \psi_1 \leq 4\pi$.

Hamiltonian \eqref{pH} in terms of the canonical coordinates $I_0, I_1, \psi_0, \psi_1$ has the form 
\begin{equation}\label{pH2}
H = 2I_0 + 2 \sqrt{(I_0-I_1)(I_0+I_1)}\cos \psi_0
\end{equation}
and Hamilton equations (\ref{eq-i0}-\ref{eq-psik}) for \eqref{pH2} are 
\begin{align}
\label{peq-i0}
 \frac{dI_0}{dt} &=2 \sqrt{(I_0-I_1)(I_0+I_1)} \sin \psi_0 \\
\label{peq-ik}
 \frac{dI_1}{dt} &=0 \\
\label{peq-psi0}
 \frac{d\psi_0 }{dt} &=2 +  \frac{2I_0}{\sqrt{(I_0-I_1)(I_0+I_1)}} \cos \psi_0  \\
\label{peq-psik}
 \frac{d\psi_1 }{dt} &=\frac{-2I_1}{\sqrt{(I_0-I_1)(I_0+I_1)}} \cos \psi_0  . 
 \end{align}

 For a fixed value $c_1\in \mathbb{R}$ of $I_1$, from \eqref{pte0} one finds that $I_0 > \max \{ -c_1, c_1\}$. So, the level $\textbf{J}^{-1}(c_1)$ of the momentum map \eqref{mmap} is isomorphic to $]|c_1|, +\infty [ \times \mathbb{S}^1 \times \mathbb{S}^1$. On the reduced phase space $\textbf{J}^{-1}(c_1)/\mathbb{S}^1 \cong ]|c_1|, +\infty [\times \mathbb{S}^1 $ Hamilton equations (\ref{peq-i0}-\ref{peq-psik}) reduce to 
 \begin{align}
 \label{p1eq-i0}
 \frac{dI_0}{dt} &=2 \sqrt{(I_0-c_1)(I_0+c_1)} \sin \psi_0 \\
\label{p1eq-psi0}
 \frac{d\psi_0 }{dt} &=2 +  \frac{2I_0}{\sqrt{(I_0-c_1)(I_0+c_1)}} \cos \psi_0  .
 \end{align}

  In considered case the classical structural function $\mathcal{G}_0$ is given by
 \begin{equation}\label{giezero}
 \mathcal{G}_0 (I_0, I_1) = (I_0-I_1)(I_0+I_1)
 \end{equation} 
 and the function $\mathcal{F}$ defined in \eqref{cfz} is 
 \begin{equation}
 z= x+iy = z_0z_1 = \sqrt{(I_0-I_1)(I_0+I_1)} e^{i\psi_0}.
 \end{equation}
 Hence, the Kummer shape is the upper half of two-sheeted hyperboloid: 
 \begin{equation}\label{twos}
 I_0^2-x^2-y^2 = c_1^2. 
 \end{equation}
 Functions $x, y$ and $I_0$ satisfy relations 
 \begin{equation}\notag
 \{ I_0, x\} = -y,
 \end{equation}
 \begin{equation}\label{6.20}
 \{ I_0 , y \} = x,
\end{equation} 
 \begin{equation}\notag
 \{ x, y\} = I_0 .
 \end{equation} 
 So, the classical Kummer shape algebra $\mathcal{K}_{\mathcal{G}_0} (c_1)$ is the Lie algebra of the group $SO(2, 1 )$.

In terms of $x,y$ and $I_0$ Hamiltonian \eqref{pH2} simplifies to 
 \begin{equation}\label{pH3}
 H= 2(I_0+x) 
\end{equation}
and the corresponding Hamilton equations are linear:
\begin{align}
\notag \frac{dI_0}{dt} &=2y \\
\label{6.23} \frac{dx}{dt} &=-2y  \\
\notag \frac{dy}{dt} &=2x  + 2I_0 . 
\end{align}
 Therefore, they can be integrated in an elementary way. One can also obtain the trajectories of \eqref{6.23} as intersections of the planes defined in \eqref{pH3} with the upper half of the two-sheeted hyperboloid \eqref{twos}.

The quantum counterpart of the Hamiltonian \eqref{pH} is
 \begin{equation}\label{pqH}
 \textbf{H} = a_0^*a_0 + a_1^*a_1  +a_0a_1 + a_0^*a_1^* +\hbar 
 \end{equation}
 and the quantum Kummer shape algebra $\mathcal{Q}_{\mathcal{G}_\hbar}(\mathcal{H}_{c_1})$ is generated by  
operators $\textbf{A}_0, \textbf{A}$ and $\textbf{A}^* $: 
\begin{equation}
\textbf{A}_0 |c_0+\hbar n , c_1\rangle = (c_0 + \hbar n ) |c_0 +\hbar n , c_1 \rangle  = (\hbar (\frac{v_1}{2}+n ))|n, v_1+n\rangle , 
\end{equation}
\begin{multline}
\textbf{A} |c_0+\hbar n , c_1 \rangle = \sqrt{(c_0-c_1 + \hbar n )(c_0+c_1+\hbar n) } |c_0+\hbar (n-1), c_1 \rangle = \\
=\hbar\sqrt{ n (v_1 + n )} |n-1, v_1 + n-1\rangle , 
\end{multline}
\begin{multline}
\textbf{A}^* |c_0+\hbar n , c_1 \rangle = \sqrt{(c_0-c_1+\hbar (n+1))(c_0+c_1 +\hbar (n+1))} |c_0 + \hbar (n+1), c_1 \rangle = \\
=\hbar\sqrt{ ( n+1)( v_1+n+1)}|n+1, v_1 +n+1 \rangle ,
\end{multline}
where we have choosen $|0, v_1 \rangle $ , $v_1 \in \mathbb{N}$, as the vacuum for the reduced Hilbert space $\mathcal{H}_{c_1}$. The quantum structural function $\mathcal{G}_\hbar $ is the following 
\begin{equation}
 \mathcal{G}_\hbar (A_0, A_1) = (A_0-A_1 +\hbar )(A_0+A_1 + \hbar)
 \end{equation}
 and in the limit $\hbar \to 0$ gives the classical structural function $\mathcal{G}_0$.
Operators $\textbf{A}_0, \textbf{A}$ and $\textbf{A}^* $ satisfy the relations 
\begin{equation}
[\textbf{A}_0, \textbf{A}] = -\hbar \textbf{A} , 
\end{equation}
\begin{equation}
[\textbf{A}_0, \textbf{A}^* ] = \hbar \textbf{A}^*,
\end{equation}
\begin{equation}\label{przykladrel}
 [\textbf{A} , \textbf{A}^*] = 2\hbar \textbf{A}_0 +\hbar^2 . 
\end{equation}
So, the quantum Kummer shape algebra is a central extension of the classical Kummer shape algebra \eqref{6.20}.

The reduced quantum Hamiltonian \eqref{Hr} in this case is 
\begin{equation}\label{qHr}
\textbf{H} = 2\textbf{A}_0 +  \textbf{A} + \textbf{A}^* + \hbar .
\end{equation}
In the basis $|n, v_1 +n \rangle $, $n\in \mathbb{N}\cup \{0\}$, the above operator has three-diagonal form which is the same as the three-term recurrence operator for the Laguerre polynomials. So, it is unitarily equivalent to the operator of multiplication by $x$ in the Hilbert space $L^2 ([0, +\infty ), x^{n_1}e^{-x} dx )$, see \cite{AO} and \cite{AO2}.

Now, having obtained classical and quantum Kummer shape algebras, we will show the correspondence between them.

Using \eqref{nowystan} and \eqref{withfactor} we find that the reduced coherent states in this case are given by 
\begin{equation}
P_{c_1} |z_0, z_1 \rangle = \frac{z_1^{v_1}}{\sqrt{\hbar^{v_1}}} |z; 0, v_1 \rangle , 
\end{equation}
where 
\begin{equation}\label{pstates}
|z; 0, v_1\rangle = \frac{1}{\sqrt{v_1!}} \left( |0, v_1 \rangle + \sum_{n=1}^\infty \frac{z^n}{\hbar^n \sqrt{n! (v_1+1)_n}}|n, v_1+n \rangle\right) 
\end{equation}
and $(a)_n := a(a+1) \ldots (a+n-1)$.

Using the coherent states \eqref{pstates} we can compute the covariant symbols of the reduced operators $\textbf{A}_0, \textbf{A}$ and $\textbf{A}^*$ obtaining
\begin{equation}
\langle \textbf{A} \rangle = z , \quad \langle \textbf{A}^* \rangle = \bar z , \quad \langle \textbf{A}_0 \rangle  \underset{\hbar \to 0 }{\longrightarrow } I_0.
\end{equation}
The Lie bracket \eqref{qpb} in the limit $ \hbar \to 0 $ satisfies the relations \eqref{6.20}.
%Thus, in the classical limit $\hbar \to 0$ we obtain the classical Kummer shape algebra, i.e. Lie algebra of $SO(2,1)$.

The reduced Heisenberg equations: %(5.2-4) on $\textbf{A}_0(t), \textbf{X}(t)$ and $\textbf{Y}(t)$ : 
\begin{align}
\frac{d}{dt} \textbf{A}_0 (t)  & = 2 \textbf{Y}(t), \\
\frac{d}{dt} \textbf{X} (t) & = -2 \textbf{Y}(t), \\
\frac{d}{dt} \textbf{Y} (t) & = 2\textbf{X}(t) + 2 \textbf{A}_0 (t) = \textbf{H},
\end{align}
defined by the Hamiltonian \eqref{pqH} are linear, hence, they can be integrated in an elementary way. %which are linear and i
In the limit $\hbar \to 0$ covariant symbols of $\textbf{A}_0 (t)$,  $\textbf{X} (t) $ and $\textbf{Y} (t) $ satisfy the hamiltonian equations \eqref{6.23}. 
\end{example}

\begin{example}[\textit{Krawtchouk polynomials case}]\label{krawtchouk}

As the Hamiltonian \eqref{H} we choose 
\begin{equation}\label{2pH}
H= (1-p)|z_0|^2 + p |z_1|^2 + \sqrt{p(1-p)} (z_0 \bar z_1 + \bar z_0 z_1 ) ,
\end{equation}
where $p\in (0,1)$ , and 
\begin{equation}
\rho = \frac{1}{2}\left(\begin{array}{cc} 
1 & -1 \\
1 & 1 
\end{array}\right) , 
\end{equation}
as the $\rho $ matrix. Thus, we find that
\begin{align}
I_0= \frac{1}{2} (|z_0|^2 -|z_1|^2), & \quad \psi_0 = \phi_0 -\phi_1, \\
I_1 = \frac{1}{2} (|z_0|^2 + |z_1|^2), & \quad \psi_1 = \phi_0 +  \phi_1 ,
\end{align}
where 
\begin{align}
\notag I_0+I_1 > 0 \\
\label{pte0} -I_0+I_1> 0, 
\end{align}
and  $0< \psi_1 \leq 4\pi $ and $-2\pi< \psi_0 \leq 2\pi$.

Expressing the Hamiltonian \eqref{2pH} in terms of the canonical coordinates $I_0, I_1, \psi_0, \psi_1$: 
\begin{equation}
H = (1-2p)I_0 + I_1 +2\sqrt{p(1-p)(I_1+I_0)(I_1-I_0)}\cos \psi_0
\end{equation}
and reducing system to $\textbf{J}^{-1}(c_1)/ \mathbb{T}$, $c_1 > 0$ we conclude that the reduces Hamilton equations are:
%\begin{align}
 %\frac{dI_0}{dt} & =2 \sqrt{p(1-p)(I_1-I_0)(I_1+I_0)} \sin \psi_0 \\
 %\frac{dI_1}{dt} & =0 \\
 %\frac{d\psi_0 }{dt} & =1-2p -  \frac{2\sqrt{p(1-p)}I_0}{\sqrt{(I_1-I_0)(I_1+I_0)}} \cos \psi_0  \\
 %\frac{d\psi_1 }{dt} & =1 +  \frac{2\sqrt{p(1-p)}I_1}{\sqrt{(I_1-I_0)(I_1+I_0)}} \cos \psi_0  . 
 %\end{align}
 %For fixed $c_1 >0$ we have $-c_1 < I_0 < c_1$. Reduced equation:
 \begin{align}
 \frac{dI_0}{dt} & =2 \sqrt{p(1-p)(c_1-I_0)(c_1+I_0)} \sin \psi_0 \\
 \frac{d\psi_0 }{dt} & =1-2p -  \frac{2\sqrt{p(1-p)}I_0}{\sqrt{(c_1-I_0)(c_1+I_0)}} \cos \psi_0    ,
 \end{align}
 where $-c_1 < I_0 <c_1 $. 
 %Complex variable $z$ : 
 %\begin{equation}
 %z= \sqrt{p(1-p) (I_1-I_0)(I_1+I_0)}. 
 %\end{equation}

 The classical structural function $\mathcal{G}_0$ is the following
 \begin{equation}
 \mathcal{G}_0 (I_0, I_1) = p(1-p) (I_1+I_0)(I_1-I_0). 
 \end{equation}
 Thus, the Kummer shape is the circularly symmetric ellipsoid:
 \begin{equation}
 x^2 +y^2 + p(1-p)I_0^2 = p(1-p)c_1^2  
 \end{equation}
 and the classical 
 Kummer shape algebra is isomorphic to the Lie algebra of the group $SO(3)$:
 \begin{equation}
 \{I_0, x\} = -y ,
 \end{equation}
 \begin{equation}
 \{ I_0, y \} = x ,
 \end{equation}
 \begin{equation}
 \{ x, y \} = - p(1-p)I_0 . 
 \end{equation}

Writing \eqref{2pH} in terms of $x,y$ and $I_0$ one finds 
 \begin{equation}
H = (1-2p)I_0 + c_1 + 2x 
\end{equation}
and thus
 \begin{align}
 \frac{dI_0}{dt} &= 2y ,\\
 \frac{dx}{dt} &= -(1-2p) y ,\\
 \frac{dy}{dt} &= (1-2p) x - 2p(1-p) I_0 . 
 \end{align}
 So, similarly to the previous example, the Hamilton equations are linear.

 Quantum version of \eqref{2pH} and the quantum structural function $\mathcal{G}_\hbar $ are
 \begin{multline}\label{qHr2}
 \textbf{H} = (1-p)a_0^*a_0 + p a_1^* a_1 + \sqrt{p(1-p)}(a_0a_1^* + a_0^* a_1 ) =\\
= (1-2p)A_0 +A_1 +A+A^*   
 \end{multline}
 and
 \begin{equation}
 \mathcal{G}_\hbar (A_0, A_1) = p(1-p) (A_1 + A_0 +\hbar ) (A_1 - A_0),
 \end{equation}
 respectively.

 For the vacuum state $|0, v_1 \rangle $, where $c_1 = \frac{\hbar}{2} v_1$ and $v_1 \in \mathbb{N}$, see definition \eqref{vacuum}, the reduced Hilbert space $\mathcal{H}_{c_1}$ has dimension $v_1 +1$. The generators of the quantum Kummer shape algebra act on the elements $|n, v_1 -n \rangle $, $n=0,1,\ldots, n_1$, of the orthonormal basis of $\mathcal{H}_{c_1}$ in the following way
 \begin{align}
 \textbf{A}_0 |n, v_1 -n \rangle &= \hbar (n - \frac{v_1}{2} ) |n, v_1 - n \rangle ,\\
 \textbf{A} |n, v_1 -n \rangle &= \hbar \sqrt{p(1-p) n(v_1 - n +1)} |n -1 , v_1 - n+1 \rangle ,\\
 \textbf{A}^* |n, v_1 -n \rangle &= \hbar \sqrt{p(1-p) (n+1) (v_1 - n)} |n+1, v_1 -n-1 \rangle .
 \end{align}
and satisfy $SO(3)$ Lie algebra relations
\begin{align}
[\textbf{A}_0, \textbf{A} ] &= -\hbar \textbf{A},\\
[\textbf{A}_0, \textbf{A}^* ] &= \hbar \textbf{A}^*,\\
[\textbf{A}, \textbf{A}^* ] &= -2p(1-p)\hbar \textbf{A}_0 .
\end{align}

Applying the reduction procedure described in Section \ref{sec:4}, see formulas \eqref{nowystan} and \eqref{withfactor}, we find that
\begin{equation}
P_{c_1} |z_0, z_1 \rangle = \frac{z_1^{v_1}}{\sqrt{\hbar^{v_1}}} |z ; 0, v_1 \rangle ,
\end{equation}
where
\begin{equation}\label{2pstates}
|z ; 0,v_1 \rangle =\sum_{n=0}^{v_1} \frac{z^n }{\sqrt{(p(1-p))^nn! (v_1-n)!}} | n, v_1 -n \rangle 
\end{equation}
and the complex variable $z$ is given by $z= \sqrt{p(1-p)}z_0z_1^{-1}$. 
The resolution of identity \eqref{2pstates} with respect to the measure $d\mu_{c_1}  (\bar z , z) = \rho_{c_1} (|z|^2) d|z|^2 d\psi_0$ with the density 
\begin{equation}
\rho_{c_1} (|z|^2) = \frac{1}{2\pi p(1-p)} \frac{(v_1+1)!}{\left(\frac{|z|^2}{p(1-p)}+1\right)^{v_1 +2}} 
\end{equation}
obtained by the formula \eqref{eldifferent}. Note also that 
\begin{equation}
\langle z; c_1 | w; c_1 \rangle  = \frac{1}{v_1!} \left( 1+ \frac{\bar z w}{p(1-p)}\right)^{v_1}
\end{equation}
is the reproducing kernel for $ d\mu_{c_1}  (\bar z , z)$.

Heisenberg equations assume the form:
\begin{align}
 \frac{d}{dt}\textbf{A}_0 (t) &= 2\textbf{Y}(t) \\
 \frac{d}{dt} \textbf{X}(t) &= -(1-2p) \textbf{Y}(t)\\ 
 \frac{d}{dt}\textbf{Y}(t) &= (1-2p) \textbf{X}(t) - 2p(1-p) \textbf{A}_0 (t). 
 \end{align}

\end{example}

\begin{example}[$q$\textit{-deformed Heisenberg-Weyl algebra}]\label{qheisenberg}

Here we consider the one-mode case, i.e. $N=0$ and $l_0=1$ with quantum Hamiltonian 
\begin{equation}\label{pqH}
\textbf{H} = g(a_0^*a_0 ) a_0 + a_0^* g( a_0^* a_0), 
\end{equation}
where the function $g : [0, +\infty [ \to \mathbb{R}$ is given by 
\begin{equation}
g(\textbf{A}_0)^2  =  \frac{\hbar}{1-q^{\frac{\hbar}{\alpha }}} \cdot \frac{1-q^{\frac{\textbf{A}_0+\hbar}{\alpha }}}{\textbf{A}_0 + \hbar }, 
\end{equation}
for $0<q<1$, $\alpha > 0 $. Note that for $N=0$ and $l_0=1$ one has $A_0 = a_0^* a_0$.  Hence, the operator 
\begin{equation}
A= g(a_0^*a_0) a_0 
\end{equation}
defined in \eqref{ao} satisfy \eqref{rel1} and the relations 
\begin{align}
\label{st1}
 \textbf{A}^* \textbf{A} = \hbar \frac{1-q^{\frac{\textbf{A}_0}{\alpha}}}{1- q^{\frac{\hbar}{\alpha}}} =: \mathcal{G}_\hbar ( \textbf{A}_0- \hbar) , \\
\label{st2}
 \textbf{A}\textbf{A}^* = \hbar  \frac{1-q^{\frac{\textbf{A}_0+\hbar}{\alpha}}}{1- q^{\frac{\hbar}{\alpha}}} =: \mathcal{G}_\hbar ( \textbf{A}_0), 
\end{align}
equivalent to the relations 
\begin{align}
 \textbf{A}\textbf{A}^* - \textbf{A}^*\textbf{A}= \hbar q^{\frac{\textbf{A}_0}{\alpha}}, \\
\label{heisenberg} 
 \textbf{A}\textbf{A}^* - q^\frac{\hbar}{\alpha} \textbf{A}^*\textbf{A} = \hbar . 
\end{align}
The Hamilton operator expressed in terms of the operators $\textbf{A}$, $\textbf{A}^*$ and $\textbf{A}_0$ takes the form 
\begin{equation}\label{qpHred}
\textbf{H} = \textbf{A} + \textbf{A}^* . 
\end{equation}

Since in the limit $\alpha \to \infty$ we obtain Heisenberg-Weyl algebra, it is reasonable to consider the quantum Kummer shape algebra defined by \eqref{rel1}, \eqref{st1} and \eqref{st2} as the $q^{\frac{\hbar}{\alpha}}$-deformation of the Heisenberg- Weyl algebra.

One has $\|\textbf{A}\| = \left(\frac{\hbar}{1-q^{\frac{\hbar}{\alpha}}}\right)^{\frac{1}{2}}$ and thus, the coherent states 
\begin{equation}
|z \rangle = |0\rangle + \sum_{n=1}^\infty \frac{\left(\frac{z}{\sqrt{\hbar}}\right)^n }{\sqrt{[1]\ldots [n]}} |n\rangle 
\end{equation}
are defined for $z\in \mathbb{D} = \left\{ z\in \mathbb{C}:|z|<\left(\frac{\hbar}{1-q^{\frac{\hbar}{\alpha}}}\right)^\frac{1}{2}\right\}$, where $[n]:=\frac{1-q^{\frac{\hbar}{\alpha }n}}{1-q^{\frac{\hbar}{\alpha}}} .$  

One has the resolution of identity
\begin{equation}
\mathbbm{1} = \int_{\mathbb{D}} \frac{|z \rangle \langle z| }{\langle z | z \rangle } d\mu_\hbar (\bar z , z) , 
\end{equation}
where 
\begin{equation}
d\mu_\hbar (\bar z , z) = \frac{1}{1-(1-q^{\frac{\hbar}{\alpha}})\frac{\bar z z}{\hbar}} \cdot \frac{1}{2\pi} \sum_{n=0}^\infty q^{\frac{\hbar}{\alpha}n} \delta \left(|z|^2 - \frac{q^{\frac{\hbar}{\alpha}n}}{1-q^\frac{\hbar}{\alpha} }\right) d|z|^2 d\psi , 
\end{equation}
see \cite{OM}.

Now let us pass to the classical case. From \eqref{st1} we have 
\begin{equation}
\textbf{A}_0 = \frac{\alpha}{\ln q} \ln \left( 1- \frac{1-q^\frac{\hbar}{\alpha}}{\hbar} \textbf{A}^* \textbf{A}\right) = \frac{\alpha}{\ln q} \sum_{n=1}^\infty \frac{1}{n}\left(\frac{1-q^{\frac{\hbar}{\alpha}}}{\hbar}\right)^n (\textbf{A}^*\textbf{A})^n. 
\end{equation}
From 
\begin{equation}
\lim_{\hbar \to 0} \frac{\langle z | (\textbf{A}^*\textbf{A})^n |z \rangle }{\langle z |z\rangle } = (\bar z z)^n 
\end{equation}
we find 
\begin{equation}
\lim_{\hbar \to 0 } \langle \textbf{A}_0^n \rangle  = I_0^n , 
\end{equation}
where 
\begin{equation}
I_0 = \frac{\alpha}{\ln q} \ln (1+ \frac{\ln q}{\alpha } \bar z z). 
\end{equation}
So, the classical structural function is the following 
\begin{equation}
\mathcal{G}_0(I_0) = \frac{\alpha}{\ln q} (q^{\frac{I_0}{\alpha}}-1). 
\end{equation}

Classical Hamiltonian \eqref{H} takes the form
\begin{equation}\label{pqHclassical}
H= \sqrt{\frac{\alpha}{\ln q } \cdot \frac{q^{\frac{|z_0|^2}{\alpha}}-1}{|z_0|^2}} (z_0 + \bar z_0 ) = 2 \sqrt{\frac{\alpha}{\ln q}(q^{\frac{I_0}{\alpha}}-1)} \cos \psi_0
\end{equation}
and the corresponding Hamilton equations are 
\begin{align}
\frac{dI_0}{dt} & =  2 \sqrt{\frac{\alpha}{\ln q}(q^{\frac{I_0}{\alpha}}-1)} \sin \psi_0 , \\
\frac{d\psi_0}{dt} & = \frac{q^{\frac{I_0}{\alpha }}}{\sqrt{\frac{\alpha}{\ln q} (1- q^{\frac{I_0}{\alpha}})}}. 
\end{align}
In considered example the classical Kummer shape algebra is given by the relations
\begin{align}
\{I_0, x \} & = -y, \\
\{ I_0, y \} & = x , \\
\{ x, y \} & = \frac{1}{2} q^{\frac{I_0}{\alpha }} . 
\end{align}
Note that the Kummer shape 
\begin{equation}\label{pqkummershape}
x^2+y^2 = \frac{\alpha}{\ln q} (q^{\frac{I_0}{\alpha}}-1), 
\end{equation}
in the limit $q \to 1 $ takes the form 
\begin{equation}
x^2+y^2 = I_0,
\end{equation}
so, \eqref{pqkummershape} is a $q$-deformation of the elliptic paraboloid.

Hamilton equations corresponding to the Hamiltonian \eqref{pqHclassical} expressed in terms of $x,y$ and $I_0$ 
\begin{equation}
H= 2x, 
\end{equation}
are the following 
\begin{align}
\notag
\frac{dI_0}{dt} & = 2y, \\
\label{equationsclassical}
\frac{dx}{dt} & = 0, \\
\notag
\frac{dy}{dt} & = q^{\frac{I_0}{\alpha }}. 
\end{align}
One can easily verify that solutions of the Heisenberg equations
\begin{align}
\notag
\frac{d}{dt} \textbf{A}_0 (t) & = 2 \textbf{Y} (t) , \\
\frac{d}{dt} \textbf{X} (t) & = 0 , \\
\notag
\frac{d}{dt} \textbf{Y} (t) & = q^{\frac{\textbf{A}_0 (t) }{\alpha}} 
\end{align}
defined by Hamiltonian \eqref{qpHred}, in the limit $\hbar \to 0$ satisfy equations  \eqref{equationsclassical}. 
\end{example}

Let us stress that there is many other important examples which can be considered through the theory presented in this paper. The case $(l_0, l_1, l_2 ) = (1,-1,-1)$ and $g_0=const$, which describes non-linear interaction between three modes, plays an important role in non-linear optics. On the classical level it was considered in \cite{Luther}, \cite{Alber} and \cite{Alber1}, for its quantum counterpart see e.g. \cite{Jag2}. In order to see the importance of the quantum Kummer shape algebras in various problems of mathematical physics, besides the quantum optics, see e.g. \cite{Fern} and \cite{higgs}, \cite{Lemon}, where the applications of these algebras to factorization method and the Kepler problem on the $N$-sphere, respectively, were considered. 
%which as a classical Hamiltonian system was considered in \cite{Luther}, \cite{Alber} and \cite{Alber1} and called three-wave Hamiltonian system. In the framework of the theory presented here we can consider the quantum three-wave Hamiltonian which has a crucial meaning for the modeling of some phenomena in non-linear quantum optics, see e.g. \cite{Jag2}. 
$$\mbox{ } $$
$$\mbox{ } $$

\textbf{Acknowledgments:}\\
The authors would like to thank to M.Horowski for his interest and helpful remarks. 

\thebibliography{99}

\bibitem{Luther}  Alber M.S., Luther G.G., Marsden J.E., Robbins J.M., \textit{Geometric analysis of optical frequency conversion and its control in quadratic nonlinear media}, J.Opt.Soc.Am. B/Vol. 17, No. 6, June 2000
\bibitem{Alber} Alber M.S., Luther G.G., Marsden J.E., Robbins J.M., \textit{Geometry and Control of Three-Wave Interactions}, The Arnoldfest (Toronto, ON, 1997), Fields Inst. Commun. \textbf{24}, AMS, Providence, RI, 55-80, 1999
\bibitem{Alber1} Alber M.S., Luther G.G., Marsden J.E., Robbins J.M., \textit{Geometric phases, reduction and Lie-Poisson structure for the resonant three-wave interaction}, Physica D, \textbf{123}:271-290, 1998
\bibitem{Allen} Allen L., Eberly J.H., \textit{Optical Resonance and Two-Level Atoms}, Wiley, New York, 1975
\bibitem{Kaup} Bers A., Kaup D.J., Reiman A., \textit{Space-time evolution of nonlinear three-wave interactions. I. Interaction in a homogeneous medium}, Reviews of Modern Physics, Vol. 51, No. 2, April 1979
\bibitem{ATer} Chadzitaskos G., Horowski M., Odzijewicz A., Tereszkiewicz A.,   I. Jex, , \textit{Explicity solvable models od two-mode coupler in Kerr-media}, Phys. Rev. A (\textbf{75}) No. 6, 2007
\bibitem{Fern} Fernandez D.J., Hussin V., \textit{Higher-order SUSY, linearized nonlinear Heisenberg algebras and coherent states}, J. Phys. A: Math. Gen. \textbf{32}, 3603-3619, 1999
\bibitem{Rahman} Gasper G., Rahman M., \textit{Basic Hypergeometric Series}, Cambridge, England: Cambridge University Press, 1990
\bibitem{Aneta} Goli\'nski T., Horowski M., Odzijewicz A., Sli\.zewska A., \textit{$sl(2, \mathbb{R})$ symmetry and solvable multiboson systems}, J. Math. Phys. 48, 023508 (2007)
\bibitem{Ryzhik} Gradshteyn I.S., Ryzhik I.M., \textit{Table of Integrals, Series, and Products}, Edited by A. Jeffrey and D. Zwillinger, Academic Press, New York, 7th edition, 2007
\bibitem{higgs} Higgs P.W., \textit{Dynamical symmetries in a spherical geometry I}, J. Phys. A: Math. Gen. \textbf{12}, No.3, 1979
\bibitem{holm} Holm D.D., \textit{Geometric mechanics, Part I:Dynamics and symmetry }, Imperial College Press, London, 2008
\bibitem{AO}  Horowski M., Odzijewicz A., Tereszkiewicz A., \textit{Integrable multi-boson systems and orthogonal polynomials }, J. Phys. A: Math. Gen. \textbf{34}, 4353-4376, 2001
\bibitem{AO2} Horowski M., Odzijewicz A., Tereszkiewicz A., \textit{Some integrable systems in nonlinear quantum optics}, J.Math. Phys. Vol. 44, No. 2, February 2003
\bibitem{Kar1} Karassiov V.P., \textit{ Polynomial Lie algebras and associated pseudogroup structures in composite quantum models}, Rep. Math. Phys. \textbf{40}, 1997
\bibitem{Kar2} Karassiov V.P., \textit{ $sl(2)$ variational schems for solving one class of nonlinear quantum models}, Phys. Lett. A \textbf{238}, 19-28, 1998
\bibitem{Kummer} Kummer M., \textit{On resonant classical Hamiltonians with $n$ frequencies}, J. Diff. Eq., \textbf{83}:220-243, 1990
\bibitem{Lemon} Leemon H.I., \textit{Dynamical symmetries in a spherical geometry II}, J. Phys. A: Math. Gen. \textbf{12}, No.4, 1979
\bibitem{SLie} Lie S., \textit{Theorie der transformationsgruppen}, Teubner, Leipzig, 1890
\bibitem{OM} Maximov V., Odzijewicz A., \textit{The $q$-deformation of quantum mechanics of one degree of freedom}, J. Math. Phys. Vol. 36, No. 4, April 1995
\bibitem{Mil}  Milburn M.J., Walls D.F.,  \textit{Quantum optics}, Springer-Verlag Berlin Heidelberg New York, 1994
\bibitem{cohS} Odzijewicz A., \textit{Coherent States and Geometric Quantization}, Commun. Math. Phys. \textbf{150}, 385-413, 1992
\bibitem{AO1} Odzijewicz A., \textit{Quantum Algebras and q-Special Functions Related to Coherent States Maps of the Disc}, Commun. Math. Phys. \textbf{192}, 183-215, 1998
\bibitem{Perina} Perina J., \textit{Quantum statistics of linear and nonlinear optical phenomena}, D. Reidel, Dordrecht, 1984
\bibitem{Podles1} Podle\'s P., \textit{Quantum Spheres}, Letters in Mathematical Physics \textbf{14} (193-201), 1987
\bibitem{Jag1} Shanta P., Chaturvedi S., Srinivasan V., Jagannathan R., \textit{Unified approach to the analogues of single-photon and multiphoton coherent states for generalized bosonic oscillators}, J. Phys. A: Math. Gen. \textbf{27}, 6433-6442, 1994
\bibitem{Jag2} Sunilkumar V., Bambah B.A., Jagannathan R., Panigrahi P.K., Srinivasan V., \textit{Coherent states of nonlinear algebras: applications to quantum optics}, J. Opt. B: Quantum Semiclass. opt. \textbf{2|}, 126-132, 2000 
\bibitem{Tesch} Teschl G., \textit{Jacobi Operators and Completly Integrable Nonlinear Lattices}, Mathematical Surveys and Monographs vol. 72, American Math. Soc., 1999

\end{document}